\journal{Journal of \LaTeX\ Templates}
\newtheorem{thm}{Theorem}[section]
\newtheorem{cor}[thm]{Corollary}
\newtheorem{lem}[thm]{Lemma}
\newtheorem{exa}[thm]{Example}
\numberwithin{equation}{section}
\newcommand{\ord}{\operatorname{ord}}
\newcommand{\Tr}{\operatorname{Tr}}
\begin{document}

\begin{frontmatter}

\title{A construction of $q$-ary linear codes with two weights\tnoteref{mytitlenote}}
\tnotetext[mytitlenote]{The paper is supported by Foundation of Science and Technology on Information Assurance Laboratory (No. KJ-15-009).}

%% or include affiliations in footnotes:
\author[mymainaddress,mysecondaryaddress,mythirdlyaddress]{Ziling Heng}
\ead{zilingheng@163.com}
\author[mymainaddress,mysecondaryaddress,mythirdlyaddress]{Qin Yue}
\ead{yueqin@nuaa.edu.cn}

\address[mymainaddress]{Department of Mathematics, Nanjing University of Aeronautics and Astronautics,
Nanjing, 211106, PR China}
\address[mysecondaryaddress]{State Key Laboratory of Cryptology, P. O. Box 5159, Beijing, 100878, PR China}
\address[mythirdlyaddress]{State Key Laboratory of Information Security, Institute of Information Engineering, Chinese
Academy of Sciences, Beijing, 100093, PR China}
\begin{abstract}
Linear codes with a few weights are very important in coding theory and have attracted a lot of attention. In this paper, we present a construction of $q$-ary linear codes from trace and norm functions over finite fields. The weight distributions of the linear codes are determined in some cases based on Gauss sums. It is interesting that our construction can produce optimal or almost optimal codes. Furthermore, we show that our codes can be used to construct secret sharing schemes with interesting access structures and strongly regular graphs with new parameters.
\end{abstract}

\begin{keyword}
linear codes\sep secret sharing schemes\sep strongly regular graphs\sep Gauss sums
\MSC[2010] 94B05\sep94B60\sep11L05
\end{keyword}

\end{frontmatter}

\linenumbers

\section{Introduction}
Let $\Bbb F_{q}$ denote the finite field with $q$ elements. An $[n,k,d]$ linear code $\mathcal{C}$ over $\Bbb F_{q}$ is a $k$-dimensional subspace of $\Bbb F_{q}^{n}$ with minimum Hamming distance $d$. An $[n,k,d]$ code is called optimal if no $[n,k,d+1]$ code exists. Let $A_{i}$ denote the number of codewords with Hamming weight $i$ in a code $\mathcal{C}$ with length $n$. The weight enumerator of $\mathcal{C}$ is defined by $1+A_{1}z+\cdots+A_{n}z^{n}$. The sequence $(1,A_1,\cdots,A_{n})$ is called the weight distribution of $\mathcal{C}$. The code $\mathcal{C}$ is said to be $t$-weight if the number of nonzero $A_j,1\leq j \leq n$, in the sequence $(1,A_1,\cdots,A_{n})$ equals $t$. The weight distribution is an interesting topic which was investigated in \cite{BM, D, DD, LY1, LY2, ZD} and many other papers. In particular, a survey of three-weight cyclic codes and their weight distributions were provided in \cite{DLLZ}. Weight distribution gives the minimum distance and the error correcting capability of a code. In addition, it contains important information on the computation of the probability of error detection and correction with respect to some error detection and correction algorithms \cite{K1}.

Recently, Ding \emph{et al.} proposed a very effective construction of linear codes in \cite{D, DD} as follows.  Let $D=\{d_1,d_2,\cdots,d_n\}\subseteq \Bbb F_{r}$, where $r$ is a power of $q$. A linear code of length $n$ over $\Bbb F_{q}$ is defined by
$$\mathcal{C}_{D}=\left\{(\Tr_{r/q}(xd_{1}),\Tr_{r/q}(xd_{2}),\cdots,\Tr_{r/q}(xd_{n})):x\in \Bbb F_{r}\right\},$$
where $\Tr_{r/q}(x)=x+x^{q}+\cdots+x^{q^{s-1}}$ denotes the trace function from $\Bbb F_{r}$ to $\Bbb F_{q}$ and $r=q^{s}$. The set $D$ is called the defining set of $\mathcal{C}$. If the set $D$ is well chosen, the code $\mathcal{C}$ may have good parameters. By using this construction and selecting proper defining sets, many good codes were found in \cite{D, DD, HY, HY0, XCX, YY, ZF}. Let $f$ be a function over $\Bbb F_{r}$. Then this construction can be equivalently written as
$$\mathcal{C}_{D}=\left\{(\Tr_{r/q}(xf(d_{1})),\Tr_{r/q}(xf(d_{2})),\cdots,\Tr_{r/q}(xf(d_{n}))):x\in \Bbb F_{r}\right\}.$$

Let $m,m_{1},m_{2}$ be positive integers such that $m_{1}|m,m_{2}|m$ and $\gcd(m_{1},m_{2})=e$. Let $\Tr_{q^{m_{i}}/q}$ be the trace function from $\Bbb F_{q^{m_{i}}}$ to $\Bbb F_{q}$, $i=1,2$. Let $\mathrm{N}_{q^{m}/q^{m_{i}}}$ be the norm function from $\Bbb F_{q^{m}}$ to $\Bbb F_{q^{m_{i}}}$, then for $x\in \Bbb F_{q^m}$,
$$\mathrm{N}_{q^{m}/q^{m_{i}}}(x)=x^{1+q^{m_{i}}+\cdots+(q^{m_{i}})^{\frac{m}{m_{i}}-1}}=x^{\frac{q^{m}-1}{q^{m_{i}}-1}}, i=1,2.$$
In this paper, we present a construction of a $q$-ary linear code as
\begin{eqnarray}\mathcal{C}_{D}=\left\{\left(\Tr_{q^{m_{1}}/q}(x\mathrm{N}_{q^{m}/q^{m_{1}}}(d_{1})),\cdots,\Tr_{q^{m_{1}}/q}
(x\mathrm{N}_{q^{m}/q^{m_{1}}}(d_{n}))\right):x\in \Bbb F_{q^{m_{1}}}\right\},\end{eqnarray}
where the defining set $D$ is given as
$$D=\{x\in \Bbb F_{q^{m}}^{*}:\Tr_{q^{m_{2}}/q}(\mathrm{N}_{q^{m}/q^{m_{2}}}(x))+a=0\}$$ for $a\in \Bbb F_{q}$.
Since the norm function $\mathrm{N}_{q^{m}/q^{m_{2}}}: \Bbb F_{q^m}^*\rightarrow \Bbb F_{q^{m_2}}^*$ is surjective, there exists an element $c\in \Bbb F_{q^{m}}^{*}$ such that $\mathrm{N}_{q^{m}/q^{m_{2}}}(c)=\frac{1}{a}$ for $a\in \Bbb F_{q}^{*}$. If $a\in \Bbb F_{q}^{*}$, then
\begin{eqnarray*}
D&=&\{x\in \Bbb F_{q^{m}}^{*}:\Tr_{q^{m_{2}}/q}(\mathrm{N}_{q^{m}/q^{m_{2}}}(x))+a=0\}\\
&=&\{x\in \Bbb F_{q^{m}}^{*}:\Tr_{q^{m_{2}}/q}(\mathrm{N}_{q^{m}/q^{m_{2}}}(cx))+1=0\}\\
&=&\{x\in \Bbb F_{q^{m}}^{*}:\Tr_{q^{m_{2}}/q}(\mathrm{N}_{q^{m}/q^{m_{2}}}(x))+1=0\}.
\end{eqnarray*}
This implies that we only need to consider $a=0,1$. We remark that this construction is a further generalization of that in \cite{HY1}. When $m=m_1$, in \cite{HY1}, the authors determined a lower bound of the minimum Hamming distance of $\mathcal{C}_{D}$ and gave its weight distributions for $m_2=2,a=0$ and $m_2=1,2,a=1$, respectively.

The purpose of this paper is to determine the weight distribution of $\mathcal{C}_{D}$ defined in Equation (1.1) in some cases. Our main mathematical tools used in this paper are Gauss sums. Consequently, we obtain four classes of two-weight linear codes with very flexible parameters. Examples given by us show that some codes are optimal or almost optimal. As some applications, our codes are used to construct secret sharing schemes with interesting access structures and strongly regular graphs with new parameters.

The following notations will be used in this paper:

\begin{tabular}{ll}
$\chi$, $\chi_1$, $\chi_2$ & canonical additive characters of $\Bbb F_{q}$, $\Bbb F_{q^{m_1}}$, $\Bbb F_{q^{m_2}}$, respectively;\\
$\lambda$, $\lambda_1$, $\lambda_2$ & generators of  multiplicative character groups of $\Bbb F_{q}$, $\Bbb F_{q^{m_1}}$, $\Bbb F_{q^{m_2}}$, respectively;\\
$G(\lambda)$, $G(\lambda_1)$, $G(\lambda_2)$ & Gauss sums over $\Bbb F_{q}$, $\Bbb F_{q^{m_1}}$, $\Bbb F_{q^{m_2}}$, respectively;\\
$\alpha$ & primitive element of $\Bbb F_{q^{m}}^{*}$;\\
$e$ & $e=\gcd(m_1,m_2)$;\\
$l$ & $l=\gcd(\frac{m_1}{e},q-1)$.
\end{tabular}

\section{Gauss sums}

In this section, we recall some basic results of Gauss sums which are important tools in this paper.

Let $\Bbb F_{q}$ be a finite field with $q$ elements, where $q$ is a power of a prime $p$. The canonical additive character of $\Bbb F_{q}$ is defined as follows:
$$\chi: \Bbb F_{q}\longrightarrow \Bbb C^{*}, \chi(x)=\zeta_{p}^{\Tr_{q/p}(x)},$$
where $\zeta_{p}=e^{\frac{2\pi\sqrt{-1}}p}$ denotes the $p$-th primitive root of complex  unity and $\Tr_{q/p}$ is the trace function from $\Bbb F_{q}$ to $\Bbb F_{p}$. The orthogonal property of additive characters (see \cite{L}) is given by:
$$ \sum_{x\in \Bbb F_{q}}\chi(ax)=\left\{
\begin{array}{ll}
  q,   &      \mbox{if}\ a=0,\\
0 & \mbox{otherwise}.
\end{array} \right. $$
Let $\psi: \Bbb F_{q}^{*}\longrightarrow \Bbb C^{*}$ be a multiplicative character of $\Bbb F_{q}^{*}$. For $\Bbb F_{q}^{*}=\langle\beta\rangle$, $\psi(\beta)=\zeta_{q-1}^{i}$ for some $0\leq i \leq q-2$. The trivial multiplicative character $\psi_{0}$ is defined by $\psi_{0}(x)=1$ for all $x\in \Bbb F_{q}^{*}$.  It is known from \cite{L} that all the multiplicative characters form a multiplication group $\widehat{\Bbb F}_{q}^{*}$, which is isomorphic to $\Bbb F_{q}^{*}$. The orthogonal property of a multiplicative character $\psi$  (see \cite{L}) is given by:
$$ \sum_{x\in \Bbb F_{q}^{*}}\psi(x)=\left\{
\begin{array}{ll}
  q-1,   &      \mbox{if}\ \psi=\psi_{0},\\
0,& \mbox{otherwise}.
\end{array} \right. $$

The \emph{Gauss sum} over $\Bbb F_{q}$ is defined by
$$G(\psi)=\sum_{x\in \Bbb F_{q}^{*}}\psi(x)\chi(x).$$
It is easy to see that $G(\psi_{0})=-1$ and $G(\bar\psi)=\psi(-1)\overline{G(\psi)}$. If $\psi \neq \psi_{0}$, we have $|G(\psi)|=\sqrt{q}$.  Gauss sums can be viewed as the Fourier coefficients in the Fourier expansion of the restriction of  $\chi$ to $\Bbb F_{q}^{*}$ in terms of the multiplicative characters of $\Bbb F_{q}$, i.e.
\begin{eqnarray}\chi(x)=\frac{1}{q-1}\sum_{\psi\in \widehat{\Bbb F}_{q}^{*}}G(\bar{\psi})\psi(x),x\in \Bbb F_{q}^{*}.\end{eqnarray}
In this paper, Gauss sum is an important tool to compute exponential sums. In general, the explicit determination of Gauss sums is a difficult problem. In some cases, Gauss sums are explicitly determined in \cite{BEW, L}.

In the following, we state the Gauss sums in the so-called semi-primitive case.
\begin{lem} \cite[Semi-primitive case Gauss sums]{BEW}
Let $\phi$ be a multiplicative character of order $N$ of $\Bbb F_{r}^{*}$. Assume that $N\neq 2$ and there exists a least positive integer $j$ such that $p^{j}\equiv -1\pmod{N}$. Let $r=p^{2j\gamma}$ for some integer $\gamma$. Then the Gauss sums of order $N$ over $\Bbb F_{r}$ are given by
\begin{eqnarray*}G(\phi)=\left\{
\begin{array}{ll}
(-1)^{\gamma-1}\sqrt{r}, & \mbox{if}\ p=2,\\
(-1)^{\gamma-1+\frac{\gamma(p^{j}+1)}{N}}\sqrt{r}, & \mbox{if}\ p\geq3.
\end{array} \right.\end{eqnarray*}
Furthermore, for $1\leq s \leq N-1$, the Gauss sums $G(\phi^{s})$ are given by
\begin{eqnarray*}G(\phi^{s})=\left\{
\begin{array}{ll}
(-1)^{s}\sqrt{r}, & \mbox{if }N\mbox{ is even, } p,\ \gamma\mbox{ and }\frac{p^{j}+1}{N}\mbox{ are odd},\\
(-1)^{\gamma-1}\sqrt{r}, & \mbox{otherwise}.
\end{array} \right.\end{eqnarray*}
\end{lem}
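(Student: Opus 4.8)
The plan is to reduce the computation over $\mathbb{F}_r$ to a Gauss sum over the small field $\mathbb{F}_{p^{2j}}$ and then invoke the Davenport--Hasse lifting relation. Since $p^j\equiv -1\pmod N$ gives $p^{2j}\equiv 1\pmod N$, we have $N\mid p^{2j}-1$, so a character of order $N$ already lives over $\mathbb{F}_{q_0}$ with $q_0=p^{2j}$; moreover $r=p^{2j\gamma}=q_0^{\gamma}$. First I would check that every multiplicative character $\phi$ of order $N$ on $\mathbb{F}_r^*$ is the lift $\phi=\phi'\circ \mathrm{N}_{r/q_0}$ of an order-$N$ character $\phi'$ of $\mathbb{F}_{q_0}^*$ (this holds because $N\mid q_0-1$ forces $\phi$ to be trivial on $\ker \mathrm{N}_{r/q_0}$). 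Then the Davenport--Hasse theorem gives $-G(\phi)=(-G(\phi'))^{\gamma}$, i.e. $G(\phi)=(-1)^{\gamma-1}G(\phi')^{\gamma}$, and the whole problem is reduced to evaluating $G(\phi')$ over $\mathbb{F}_{q_0}$.

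Next I would show $G(\phi')=\pm p^{j}$. Directly from the definition and the invariance $\Tr_{q_0/p}(x^{p})=\Tr_{q_0/p}(x)$ one gets $G(\psi^{p})=G(\psi)$ for every character $\psi$; iterating $j$ times yields $G(\phi'^{p^{j}})=G(\phi')$. The semi-primitive hypothesis $p^{j}\equiv -1\pmod N$ means $\phi'^{p^{j}}=\overline{\phi'}$, so $G(\overline{\phi'})=G(\phi')$. Comparing with the standard identity $G(\overline{\phi'})=\phi'(-1)\overline{G(\phi')}$, and noting $\phi'(-1)=1$ (trivially if $p=2$, and because $p^{j}-1$ is even so $-1$ is an $N$th power when $p$ is odd), I conclude that $G(\phi')=\overline{G(\phi')}$ is real. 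Since $\phi'$ is nontrivial we have $|G(\phi')|=\sqrt{q_0}=p^{j}$, which forces $G(\phi')=\varepsilon p^{j}$ with $\varepsilon\in\{\pm1\}$.

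The main obstacle is pinning down the sign $\varepsilon$, which realness alone cannot decide. For this I would use Stickelberger's theorem on the exact prime factorization of Gauss sums. Writing $\phi'=\omega^{-m}$ with $\omega$ the Teichm\"uller character and $m=(q_0-1)/N=(p^{j}-1)\frac{p^{j}+1}{N}$, the base-$p$ digit sum of $m$ equals $j(p-1)$, which recovers $|G(\phi')|=p^{j}$; the refined Stickelberger congruence then identifies the leading coefficient (a product of factorials of the base-$p$ digits of $m$) modulo a prime above $p$, and comparing this residue with $\varepsilon p^{j}$ extracts the sign. Carrying out the digit computation in the semi-primitive case yields $\varepsilon=1$ when $p=2$ and $\varepsilon=(-1)^{(p^{j}+1)/N}$ when $p$ is odd. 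This Stickelberger/digit-sum bookkeeping is the technically delicate step.

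Finally I would assemble the formula. Substituting $G(\phi')=\varepsilon p^{j}$ and $\sqrt r=(p^{j})^{\gamma}$ into $G(\phi)=(-1)^{\gamma-1}G(\phi')^{\gamma}$ gives $G(\phi)=(-1)^{\gamma-1}\varepsilon^{\gamma}\sqrt r$; for $p=2$ this is $(-1)^{\gamma-1}\sqrt r$, and for odd $p$ it is $(-1)^{\gamma-1+\gamma(p^{j}+1)/N}\sqrt r$, matching the stated values of $G(\phi)$. For $G(\phi^{s})$ I would run the same argument with $\phi'^{s}$ in place of $\phi'$: this character has order $N_s=N/\gcd(N,s)$, still satisfies $p^{j}\equiv -1\pmod{N_s}$, and its base sign becomes $\varepsilon_s=(-1)^{(p^{j}+1)/N_s}=(-1)^{\gcd(N,s)(p^{j}+1)/N}$. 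Tracking parities then separates exactly two cases: when $N$ is even and $p,\gamma,\frac{p^{j}+1}{N}$ are all odd one has $\varepsilon_s^{\gamma}=(-1)^{\gcd(N,s)}=(-1)^{s}$ (using that $N$ even makes $\gcd(N,s)\equiv s\pmod 2$) together with $(-1)^{\gamma-1}=1$, giving $(-1)^{s}\sqrt r$; in every other case the exponent is even and one obtains $(-1)^{\gamma-1}\sqrt r$ (the impossible combination ``$N$ odd with $\frac{p^{j}+1}{N}$ odd'' is ruled out because $p^{j}+1$ is even for odd $p$). This reproduces the two displayed formulas for $G(\phi^{s})$.
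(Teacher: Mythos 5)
The paper itself offers no proof of this lemma: it is quoted directly from Berndt--Evans--Williams \cite{BEW} as a known evaluation, so your proposal has to be judged against the standard literature argument, which indeed follows the route you describe (lift along the norm, Davenport--Hasse, realness from $\phi'^{p^{j}}=\overline{\phi'}$, Stickelberger for the sign). Most of your outline checks out. The lifting step is justified since $N\mid p^{2j}-1$ places all order-$N$ characters in the subgroup of characters trivial on $\ker \mathrm{N}_{r/q_{0}}$; the realness argument and $\phi'(-1)=1$ are correct; and for odd $p$ the digit bookkeeping genuinely works: writing $m=k(p^{j}-1)=(k-1)p^{j}+(p^{j}-k)$ with $k=(p^{j}+1)/N$ gives digit sum $j(p-1)$, Wilson's theorem turns the factorial product $t(m)$ into $(-1)^{j+s(k-1)}$, hence $\varepsilon=(-1)^{s(k-1)+1}=(-1)^{k}$ because a digit sum has the parity of the integer when $p$ is odd. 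Your parity analysis of $G(\phi^{s})$ (including the observation that ``$N$ odd with $(p^{j}+1)/N$ odd'' is vacuous, and that $N$ even forces $\gcd(N,s)\equiv s\pmod 2$) correctly reproduces the two displayed cases.

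The genuine gap is the case $p=2$, which is one of the two displayed formulas for $G(\phi)$. Your method pins the sign by reducing the Stickelberger congruence $G\equiv -\pi^{s(m)}/t(m)$ modulo a prime $\mathfrak{P}$ above $p$; but when $p=2$ the two candidates $+2^{j}$ and $-2^{j}$ are congruent modulo every prime above $2$, so this congruence carries no information: writing $G(\phi')=\varepsilon 2^{j}$ you only learn $\varepsilon\equiv 1 \pmod{\mathfrak{P}}$, which both signs satisfy. So the assertion that ``carrying out the digit computation yields $\varepsilon=1$ when $p=2$'' is not something your stated tools can deliver. To close it you need either the exact Gross--Koblitz formula (computing the $2$-adic Gamma factors), or a separate argument; for instance, when $N=\ell$ is an odd prime, $G(\phi'^{2})=G(\phi')$ in characteristic $2$, so $J(\phi',\phi')=G(\phi')^{2}/G(\phi'^{2})=G(\phi')=\varepsilon 2^{j}$, while the termwise reduction $J\equiv q_{0}-2\equiv -1 \pmod{(1-\zeta_{\ell})}$ forces $\ell\mid \varepsilon 2^{j}+1$; the choice $\varepsilon=-1$ would give $2^{j}\equiv 1\pmod{\ell}$, contradicting $2^{j}\equiv -1\pmod{\ell}$, hence $\varepsilon=+1$. (Composite odd $N$ requires a further reduction, e.g. to a prime-order power of $\phi'$.) Without some such supplement, the $p=2$ half of the first display remains unproven in your write-up.
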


The well-known quadratic Gauss sums are the following.
\begin{lem}\cite[Theorem 5.15]{L}
Suppose that $q=p^{t}$ and $\eta$ is the quadratic multiplicative character of $\Bbb F_{q}$, where $p$ is an odd prime. Then
$$G(\eta)=(-1)^{t-1}(\sqrt{p^{*}})^{t}=\left\{
\begin{array}{ll}
(-1)^{t-1}\sqrt{q}, & \mbox{if}\ p\equiv 1\pmod{4},\\
(-1)^{t-1}(\sqrt{-1})^{t}\sqrt{q}, & \mbox{if}\ p\equiv 3\pmod{4},\\
\end{array}
\right.$$ where $p^{*}=(-1)^{\frac{p-1}{2}}p$.
\end{lem}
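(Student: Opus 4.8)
The plan is to split the determination of $G(\eta)$ into its easy part—the modulus and the real/imaginary type, which are forced by the character identities already recorded—and its hard part, the sign, which carries the arithmetic content. Note first that the semi-primitive Lemma~2.1 explicitly excludes $N=2$, so it gives no information here and an independent argument is needed. Since $\eta$ has order $2$ we have $\bar\eta=\eta$, so combining $G(\bar\eta)=\eta(-1)\overline{G(\eta)}$ with $|G(\eta)|=\sqrt q$ yields $G(\eta)^2=\eta(-1)q$. As $\eta(-1)=(-1)^{(q-1)/2}$ and $q=p^t$, this shows $G(\eta)=\pm\sqrt q$ when $q\equiv1\pmod4$ and $G(\eta)=\pm\sqrt{-1}\,\sqrt q$ when $q\equiv3\pmod4$. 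This already agrees with the claimed value up to sign: squaring $(-1)^{t-1}(\sqrt{p^*})^t$ gives $(p^*)^t=(-1)^{t(p-1)/2}p^t$, and $(-1)^{t(p-1)/2}=(-1)^{(p^t-1)/2}=\eta(-1)$ because $p^{t-1}+\cdots+1\equiv t\pmod 2$.

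The real work is the sign, and I would first settle the prime field $\Bbb F_p$. Writing the number of solutions of $x^2=c$ as $1+\eta(c)$ collapses the classical sum $g=\sum_{j=0}^{p-1}\zeta_p^{j^2}$ to $G(\eta_p)$, so it suffices to show $g=\sqrt{p^*}$ with the principal branch, i.e. $g=\sqrt p$ for $p\equiv1\pmod4$ and $g=\sqrt{-1}\,\sqrt p$ for $p\equiv3\pmod4$. I would obtain this by Schur's eigenvalue argument: the Fourier matrix $F=(\zeta_p^{jk})_{0\le j,k\le p-1}$ satisfies $F^2=pP$, where $P$ is the permutation $j\mapsto-j$, whence $F^4=p^2I$ and every eigenvalue lies in $\{\pm\sqrt p,\pm\sqrt{-1}\,\sqrt p\}$. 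The four multiplicities are then pinned down by the values $\Tr F=g$, $\Tr F^2=p$, and the Vandermonde determinant $\det F$, and solving this small system fixes the sign of $g$.

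Finally I would lift from $\Bbb F_p$ to $\Bbb F_q$. Since the norm $\mathrm{N}_{q/p}\colon\Bbb F_q^*\to\Bbb F_p^*$ is surjective, the composition $\eta_p\circ\mathrm{N}_{q/p}$ is a nontrivial character of order $2$, hence equals the unique quadratic character $\eta$ of $\Bbb F_q$. The Davenport--Hasse lifting relation $G(\chi\circ\mathrm{N}_{q/p})=(-1)^{t-1}G(\chi)^t$ then gives $G(\eta)=(-1)^{t-1}G(\eta_p)^t=(-1)^{t-1}(\sqrt{p^*})^t$, which is exactly the asserted formula; expanding $(\sqrt{p^*})^t$ according to $p\bmod 4$ produces the two displayed cases.

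The main obstacle is Gauss's sign determination on $\Bbb F_p$. Every purely formal manipulation of characters is blind to the sign, since it only ever constrains $G(\eta)^2$; genuinely arithmetic or analytic input is unavoidable, supplied here by the trace/determinant computation for $F$ (or, equivalently, by Dirichlet's theta-function transformation for $\sum_j e^{2\pi\sqrt{-1}\,j^2/p}$). The Davenport--Hasse step is routine once the prime field case is established, though it too must be proved if one does not simply quote it.
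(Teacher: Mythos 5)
The paper offers no proof of this lemma at all: it is quoted as a known result from Lidl--Niederreiter \cite[Theorem 5.15]{L}, so there is no internal argument to compare yours against. Your outline is correct and is essentially the standard textbook proof of that cited theorem --- reduce the sign determination to the prime field via the Davenport--Hasse lifting relation (Theorem 5.14 in the same reference), and settle the prime-field sign by Schur's eigenvalue/trace analysis of the DFT matrix $F=(\zeta_p^{jk})$ --- with the two acknowledged black boxes (the Vandermonde determinant computation and Davenport--Hasse itself) being exactly the nontrivial inputs any proof of this result must supply.
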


\section{Exponential sums}
In this section, we investigate two exponential sums which will be used to calculate the weight distribution of $\mathcal{C}_{D}$.

Let $\chi$ be the canonical additive character of $\Bbb F_{q}$. Let $\chi_{i}$ be the canonical additive character of $\Bbb F_{q^{m_i}}$, $i=1,2$, respectively. Denote
$$\Omega(b)=\sum_{x\in \Bbb F_{q^{m}}^{*}}\sum_{y,z\in \Bbb F_{q}^{*}}\chi_{1}(ybx^{\frac{q^{m}-1}{q^{m_{1}-1}}})\chi_{2}(zx^{\frac{q^{m}-1}{q^{m_{2}}-1}})\chi(z),b\in \Bbb F_{q^{m_{1}}}^{*},$$ and
$$\Delta(b)=\sum_{x\in \Bbb F_{q^{m}}^{*}}\sum_{y,z\in \Bbb F_{q}^{*}}\chi_{1}(ybx^{\frac{q^{m}-1}{q^{m_{1}-1}}})\chi_{2}(zx^{\frac{q^{m}-1}{q^{m_{2}}-1}}),b\in \Bbb F_{q^{m_{1}}}^{*}.$$

Firstly, we begin to compute the exponential sum $\Omega(b)$.
\begin{lem}
Let $m,m_1,m_2$ be positive integers such that $m_{1}|m,m_2|m$ and $\gcd(\frac{m_1}{e},q-1)=l$, where $e=\gcd(m_1,m_2)$. Let $\widehat{\Bbb F}_{q}^{*}=\langle\lambda\rangle$, $\widehat{\Bbb F}_{q^{m_i}}^{*}=\langle\lambda_i\rangle$, $t_i=\frac{q^{m_i}-1}{q^{e}-1}$, $i=1,2$. For $b\in \Bbb F_{q^{m_1}}^{*}$, we have
$$\Omega(b)=\frac{(q^{m}-1)(q-1)}{(q^{m_{1}}-1)(q^{m_{2}}-1)}\sum_{s\in S}G(\bar{\lambda}_{1}^{t_1 s})G(\lambda_{2}^{t_2 s})\lambda_{1}^{t_1 s}(b)G(\bar{\lambda}^{\frac{m_2}{e} s}),$$ where $S=\{\frac{q-1}{l}j:j=0,1,\ldots,\frac{q^{e}-1}{q-1}l-1\}$.
\end{lem}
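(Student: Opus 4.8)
The plan is to expand each of the three additive characters into multiplicative characters via the Gauss-sum Fourier identity (2.1), and then exploit orthogonality to collapse the resulting sums. Writing $N_i=\frac{q^m-1}{q^{m_i}-1}$ so that $x^{(q^m-1)/(q^{m_i}-1)}=\mathrm N_{q^m/q^{m_i}}(x)$, and noting that for $x\in\Bbb F_{q^m}^*$, $y\in\Bbb F_q^*$, $b\in\Bbb F_{q^{m_1}}^*$ all arguments lie in the respective multiplicative groups, I would substitute (2.1) for $\chi_1(yb\,\mathrm N_{q^m/q^{m_1}}(x))$, for $\chi_2(z\,\mathrm N_{q^m/q^{m_2}}(x))$, and for $\chi(z)$. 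After using multiplicativity (so that $\psi_1(yb\,\mathrm N(x))=\psi_1(y)\psi_1(b)\psi_1(\mathrm N(x))$, and similarly for $\psi_2$) and interchanging the order of summation, $\Omega(b)$ becomes a triple character sum whose inner sums over $x$, $y$, $z$ factor apart, with overall constant $\frac{1}{(q^{m_1}-1)(q^{m_2}-1)(q-1)}$ and external Gauss-sum factors $G(\bar\psi_1)G(\bar\psi_2)G(\bar\psi)\psi_1(b)$.

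Next I would carry out the three inner sums by orthogonality. The sum $\sum_{y\in\Bbb F_q^*}\psi_1(y)$ equals $q-1$ if the restriction $\psi_1|_{\Bbb F_q^*}$ is trivial and $0$ otherwise; the sum $\sum_{z\in\Bbb F_q^*}\psi_2(z)\psi(z)$ equals $q-1$ precisely when $\psi=\overline{\psi_2|_{\Bbb F_q^*}}$, which removes the $\psi$-summation and replaces $G(\bar\psi)$ by $G(\psi_2|_{\Bbb F_q^*})$. The remaining sum over $x$ is $\sum_{x\in\Bbb F_{q^m}^*}(\psi_1\circ\mathrm N_{q^m/q^{m_1}})(x)\,(\psi_2\circ\mathrm N_{q^m/q^{m_2}})(x)$; since each $\psi_i\circ\mathrm N_{q^m/q^{m_i}}$ is a multiplicative character of $\Bbb F_{q^m}^*$, this equals $q^m-1$ when the product character is trivial and $0$ otherwise. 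Thus only those $(\psi_1,\psi_2)$ survive for which $\psi_1|_{\Bbb F_q^*}$ is trivial and $(\psi_1\circ\mathrm N_{q^m/q^{m_1}})(\psi_2\circ\mathrm N_{q^m/q^{m_2}})=1$, and the prefactor collapses to $\frac{(q^m-1)(q-1)}{(q^{m_1}-1)(q^{m_2}-1)}$.

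The heart of the argument is parametrizing these surviving pairs. Fixing a primitive element $\alpha$ of $\Bbb F_{q^m}^*$ and the compatible generators $\alpha^{N_i}$ of $\Bbb F_{q^{m_i}}^*$, one checks that $\lambda_i\circ\mathrm N_{q^m/q^{m_i}}$ is the $N_i$-th power of the canonical generator of $\widehat{\Bbb F}_{q^m}^*$; hence, writing $\psi_i=\lambda_i^{a_i}$, product-triviality becomes $N_1a_1+N_2a_2\equiv0\pmod{q^m-1}$, while $\psi_1|_{\Bbb F_q^*}=1$ becomes $(q-1)\mid a_1$. Solvability of the former for $a_2$ forces $\gcd(N_2,q^m-1)=N_2\mid N_1a_1$, i.e. $(q^{m_1}-1)\mid(q^{m_2}-1)a_1$, i.e. $t_1\mid t_2a_1$; since $e=\gcd(m_1,m_2)$ gives $\gcd(t_1,t_2)=1$, this reduces to $t_1\mid a_1$. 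Combining with $(q-1)\mid a_1$ and using $\gcd(t_1,q-1)=l$ (which follows from $t_1\equiv\frac{m_1}{e}\pmod{q-1}$), the exact condition is $\frac{(q-1)t_1}{l}\mid a_1$, which has exactly $\frac{(q^e-1)l}{q-1}=|S|$ solutions in range, namely $a_1=t_1s$ with $s\in S$. The identity $N_1t_1=N_2t_2=\frac{q^m-1}{q^e-1}$ then yields $a_2\equiv-t_2s$, so $\psi_2=\bar\lambda_2^{t_2s}$ and $\bar\psi_2=\lambda_2^{t_2s}$; finally $t_2\equiv\frac{m_2}{e}\pmod{q-1}$ identifies $\psi_2|_{\Bbb F_q^*}=\bar\lambda^{(m_2/e)s}$, so $G(\psi_2|_{\Bbb F_q^*})=G(\bar\lambda^{(m_2/e)s})$. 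Substituting $G(\bar\psi_1)=G(\bar\lambda_1^{t_1s})$, $G(\bar\psi_2)=G(\lambda_2^{t_2s})$, and $\psi_1(b)=\lambda_1^{t_1s}(b)$ gives the claimed formula.

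I expect the main obstacle to be the number-theoretic bookkeeping in this last step rather than the character-sum manipulations: one must fix mutually compatible generators so that the three Gauss-sum exponents align, and verify the four arithmetic facts $\gcd(t_1,t_2)=1$, $\gcd(t_1,q-1)=l$, $t_i\equiv m_i/e\pmod{q-1}$, and $N_1t_1=N_2t_2$, together with the counting that produces precisely the index set $S$. Each is elementary, but getting the signs and the exact exponents right (in particular the appearance of $m_2/e$ rather than $t_2$ in the last Gauss sum) is where the care is needed.
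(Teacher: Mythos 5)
Your proposal is correct and follows essentially the same route as the paper's own proof: both expand the additive characters into Gauss sums via Equation (2.1), use orthogonality to isolate exactly those pairs $(\psi_1,\psi_2)$ with $\psi_1|_{\Bbb F_{q}^{*}}$ trivial and $(\psi_1\circ\mathrm{N}_{q^{m}/q^{m_1}})(\psi_2\circ\mathrm{N}_{q^{m}/q^{m_2}})$ trivial, and perform the same arithmetic bookkeeping ($\gcd(t_1,t_2)=1$, $t_i\equiv m_i/e\pmod{q-1}$, $\gcd(t_1,q-1)=l$) to obtain the index set $S$ and the exponents in the three Gauss-sum factors. The only cosmetic difference is that you also Fourier-expand $\chi(z)$ and then collapse that expansion by orthogonality, whereas the paper leaves $\chi(z)$ intact so that the final sum over $z$ is literally the Gauss sum $G(\bar{\lambda}^{\frac{m_2}{e}s})$.
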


\begin{proof}
For $\Bbb F_{q^{m}}^{*}=\langle \alpha\rangle$, let $\alpha_{1}=\alpha^{\frac{q^{m}-1}{q^{m_{1}}-1}}$ and $\alpha_{2}=\alpha^{\frac{q^{m}-1}{q^{m_{2}}-1}}$. Then we have $\Bbb F_{q^{m_1}}^{*}=\langle \alpha_{1}\rangle$ and $\Bbb F_{q^{m_2}}^{*}=\langle \alpha_{2}\rangle$. This implies that
\begin{eqnarray*}
\Omega(b)&=&\sum_{y,z\in \Bbb F_{q}^{*}}\sum_{i=0}^{q^{m}-2}\chi_{1}(yb\alpha^{\frac{q^{m}-1}{q^{m_{1}-1}}i})\chi_{2}(z\alpha^{\frac{q^{m}-1}{q^{m_{2}}-1}i})\chi(z)\\
&=&\sum_{y,z\in \Bbb F_{q}^{*}}\sum_{i=0}^{q^{m}-2}\chi_{1}(yb\alpha_{1}^{i})\chi_2(z\alpha_2^{i})\chi(z).
\end{eqnarray*}
Using the Fourier expansion of additive characters (see Equation (2.1)), we have
\begin{eqnarray*}
\Omega(b)&=&\frac{1}{(q^{m_{1}}-1)(q^{m_{2}}-1)}\sum_{y,z\in \Bbb F_{q}^{*}}\chi(z)\sum_{i=0}^{q^{m}-2}\sum_{\psi_{1}\in \widehat{\Bbb F}_{q^{m_1}}^{*}}G(\bar{\psi}_{1})\psi_{1}(yb\alpha_{1}^{i})\sum_{\psi_{2}\in \widehat{\Bbb F}_{q^{m_2}}^{*}}G(\bar{\psi}_{2})\psi_{2}(z\alpha_{2}^{i})\\
&=&\frac{1}{(q^{m_{1}}-1)(q^{m_{2}}-1)}\sum_{y,z\in \Bbb F_{q}^{*}}\chi(z)\sum\limits_{\psi_{j}\in \widehat{\Bbb F}_{q^{m_j}}^{*}\atop j=1,2}G(\bar{\psi}_{1})G(\bar{\psi}_{2})\psi_{1}(yb)\psi_2(z)\sum_{i=0}^{q^{m}-2}\psi_{1}(\alpha_{1}^{i})\psi_{2}(\alpha_{2}^{i}).
\end{eqnarray*}
Since $m_{i}|m$, we obtain $\ord(\psi_i)|(q^{m}-1)$, where $i=1,2$. Therefore, we have
$$(\psi_{1}(\alpha_{1})\psi_{2}(\alpha_{2}))^{q^{m}-1}=1$$ and
\begin{eqnarray*} \sum_{i=0}^{q^{m}-2}\psi_{1}(\alpha_{1}^{i})\psi_{2}(\alpha_{2}^{i})
&=&\sum_{i=0}^{q^{m}-2}(\psi_{1}(\alpha_{1})\psi_{2}(\alpha_{2}))^{i}\\
&=&\left\{
\begin{array}{ll}
  q^{m}-1,   &      \mbox{if}\ \psi_{1}(\alpha_{1})\psi_{2}(\alpha_{2})=1,\\
0 & \mbox{otherwise}.
\end{array} \right.
\end{eqnarray*}
Let $\widehat{\Bbb F}_{q^{m_i}}^{*}=\langle \lambda_{i}\rangle$ such that $\lambda_{i}(\alpha_i)=\zeta_{q^{m_i}-1}$, where $i=1,2$. Assume that $\psi_1=\lambda_1^{u}$ and $\psi_2=\lambda_2^{v}$ for $0\leq u \leq q^{m_1}-2$ and $0\leq v \leq q^{m_2}-2$. If $\psi_{1}(\alpha_{1})\psi_{2}(\alpha_{2})=1$, then $\zeta_{q^{m_1}-1}^{u}\zeta_{q^{m_2}-1}^{v}=1$ which is equivalent to
\begin{eqnarray}(q^{m_2}-1)u+(q^{m_1}-1)v\equiv 0\pmod{(q^{m_1}-1)(q^{m_2}-1)}.\end{eqnarray}
This implies that $(q^{m_2}-1)u+(q^{m_1}-1)v\equiv 0\pmod{q^{m_i}-1},i=1,2$.
Therefore, $(q^{m_2}-1)u \equiv0\pmod{q^{m_1}-1}$ and $(q^{m_1}-1)v \equiv0\pmod{q^{m_2}-1}$. It is known that
$$\gcd(q^{m_{1}}-1,q^{m_{2}}-1)=q^{\gcd(m_1,m_2)}-1=q^{e}-1.$$
Then we have $u\equiv 0\pmod{\frac{q^{m_1}-1}{q^{e}-1}}$ and $v\equiv 0\pmod{\frac{q^{m_2}-1}{q^{e}-1}}$. Denote $u=t_{1}s_1$ and $v=t_{2}s_2$ for $0\leq s_1,s_2 \leq q^{e}-2$, where $t_{1}=\frac{q^{m_1}-1}{q^{e}-1}$, $t_{2}=\frac{q^{m_2}-1}{q^{e}-1}$. Substituting $u=t_{1}s_1,v=t_{2}s_2$ into Equation (3.1), we have
$s_1+s_2=q^{e}-1$. Hence,
\begin{eqnarray*}
\Omega(b)&=&\frac{q^{m}-1}{(q^{m_{1}}-1)(q^{m_{2}}-1)}\sum_{y,z\in \Bbb F_{q}^{*}}\chi(z)\sum_{s_1=0}^{q^{e}-2}G(\bar{\lambda}_{1}^{t_1 s_1})G(\lambda_{2}^{t_2 s_1})\lambda_{1}^{t_1 s_1}(yb)\bar{\lambda}_2^{t_2 s_1}(z)\\
&=&\frac{q^{m}-1}{(q^{m_{1}}-1)(q^{m_{2}}-1)}\sum_{s_1=0}^{q^{e}-2}G(\bar{\lambda}_{1}^{t_1 s_1})G(\lambda_{2}^{t_2 s_1})\lambda_{1}^{t_1 s_1}(b)\sum_{z\in \Bbb F_{q}^{*}}\chi(z)\bar{\lambda}_2^{t_2 s_1}(z)\sum_{y\in \Bbb F_{q}^{*}}\lambda_{1}^{t_1 s_1}(y).
\end{eqnarray*}
Assume that $\Bbb F_{q}^{*}=\langle\beta\rangle$, where $\beta=\alpha_{1}^{\frac{q^{m_1}-1}{q-1}}=\alpha_{2}^{\frac{q^{m_2}-1}{q-1}}$. Hence, $\lambda_1(\beta)=\lambda_2(\beta)=\zeta_{q-1}$. Since $\gcd(\frac{m_1}{e},q-1)=l$, we have
\begin{eqnarray*} \sum_{y\in \Bbb F_{q}^{*}}\lambda_{1}^{t_1 s_1}(y)
&=&\sum_{i=0}^{q-2}\lambda_{1}^{t_1 s_1}(\beta^{i})=\sum_{i=0}^{q-2}(\zeta_{q-1}^{\frac{m_1s_1}{e}})^{i}\\
&=&\left\{
\begin{array}{ll}
  q-1,   &      \mbox{if}\ s_1\equiv 0\pmod{\frac{q-1}{l}},\\
0 & \mbox{otherwise}.
\end{array} \right.
\end{eqnarray*}
Denote $S=\{s_{1}:s_1\equiv 0\pmod{\frac{q-1}{l}},0\leq s_1 \leq q^{e}-2\}$. Let $\widehat{\Bbb F}_{q}^{*}=\langle \lambda\rangle$. Since $\lambda_{2}(z)=\lambda(z)$ for $z\in \Bbb F_{q}$, we have that
\begin{eqnarray*}
\Omega(b)
&=&\frac{(q^{m}-1)(q-1)}{(q^{m_{1}}-1)(q^{m_{2}}-1)}\sum_{s_1\in S}G(\bar{\lambda}_{1}^{t_1 s_1})G(\lambda_{2}^{t_2 s_1})\lambda_{1}^{t_1 s_1}(b)\sum_{z\in \Bbb F_{q}^{*}}\chi(z)\bar{\lambda}_2^{t_2 s_1}(z)\\
&=&\frac{(q^{m}-1)(q-1)}{(q^{m_{1}}-1)(q^{m_{2}}-1)}\sum_{s_1\in S}G(\bar{\lambda}_{1}^{t_1 s_1})G(\lambda_{2}^{t_2 s_1})\lambda_{1}^{t_1 s_1}(b)\sum_{z\in \Bbb F_{q}^{*}}\chi(z)\bar{\lambda}^{\frac{m_2}{e} s_1}(z)\\
&=&\frac{(q^{m}-1)(q-1)}{(q^{m_{1}}-1)(q^{m_{2}}-1)}\sum_{s_1\in S}G(\bar{\lambda}_{1}^{t_1 s_1})G(\lambda_{2}^{t_2 s_1})\lambda_{1}^{t_1 s_1}(b)G(\bar{\lambda}^{\frac{m_2}{e} s_1}).
\end{eqnarray*}
The proof is completed.
\end{proof}

We remark that the Fourier expansion of additive characters used in Lemma 3.1 is an effective technique in computing exponential sums. It was also employed in \cite{LY1} to determine the weight distribution of cyclic codes by Li and Yue. By Lemma 3.1, we know that the value distribution of $\Omega(b)$ can be determined if the Gauss sums are known. In the following, we mainly consider some special cases to give the value distribution of $\Omega(b)$.

\begin{lem}
Let $l=1$ and other notations and hypothesises be the same as those of Lemma 3.1. Then the value distribution of $\Omega(b),b\in \Bbb F_{q^{m_1}}^{*}$, is the following.
\begin{itemize}
\item [(1)] If $e=1$, then $$\Omega(b)=\frac{-(q^{m}-1)(q-1)}{(q^{m_{1}}-1)(q^{m_{2}}-1)}, b\in \Bbb F_{q^{m_1}}^{*}.$$
\item [(2)] If $e=2$, then
\begin{eqnarray*}\Omega(b)=\left\{
\begin{array}{ll}
\frac{-(q^{m}-1)(q-1)}{(q^{m_{1}}-1)(q^{m_{2}}-1)}(1+(-1)^{\frac{m_1+m_2}{2}}q^{\frac{m_1+m_2}{2}+1}), & \frac{q^{m_1}-1}{q+1} \mbox{ times}\\
\frac{-(q^{m}-1)(q-1)}{(q^{m_{1}}-1)(q^{m_{2}}-1)}(1+(-1)^{\frac{m_1+m_2}{2}+1}q^{\frac{m_1+m_2}{2}}), & \frac{q(q^{m_1}-1)}{q+1} \mbox{ times}.
\end{array} \right.
\end{eqnarray*}
\end{itemize}
\end{lem}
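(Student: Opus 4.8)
The plan is to feed the formula of Lemma 3.1 into the semi-primitive Gauss sum evaluations of Lemma 2.1. Setting $l=1$ the index set becomes $S=\{(q-1)j:0\le j\le \frac{q^{e}-1}{q-1}-1\}$, which has $\frac{q^{e}-1}{q-1}$ elements. For part (1), $e=1$ forces $S=\{0\}$, so only the term $s=0$ survives. There each of $G(\bar\lambda_1^{0})$, $G(\lambda_2^{0})$, $G(\bar\lambda^{0})$ is the Gauss sum of a trivial character, hence equals $-1$, while $\lambda_1^{0}(b)=1$; the bracket in Lemma 3.1 is $(-1)(-1)(1)(-1)=-1$, and multiplying by the prefactor gives the asserted constant.

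For part (2), $e=2$ gives $|S|=q+1$, i.e. $s=(q-1)j$ with $0\le j\le q$. First I would dispose of the $\mathbb{F}_{q}$-factor: since $\ord(\lambda)=q-1$ and $\frac{m_2}{e}s=\frac{m_2}{2}(q-1)j$ is a multiple of $q-1$, the character $\bar\lambda^{\frac{m_2}{e}s}$ is trivial for every $s\in S$, so $G(\bar\lambda^{\frac{m_2}{e}s})=-1$ throughout. Writing $\phi_1=\lambda_1^{(q^{m_1}-1)/(q+1)}$ and $\phi_2=\lambda_2^{(q^{m_2}-1)/(q+1)}$, both of order $q+1$, the relation $t_1(q-1)=\frac{q^{m_1}-1}{q+1}$ (and likewise for $t_2$) gives $\lambda_1^{t_1 s}=\phi_1^{\,j}$ and $\lambda_2^{t_2 s}=\phi_2^{\,j}$. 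Hence the bracketed sum equals $-T(b)$ with $T(b)=\sum_{j=0}^{q}G(\bar\phi_1^{\,j})G(\phi_2^{\,j})\phi_1^{\,j}(b)$, so $\Omega(b)=\frac{-(q^{m}-1)(q-1)}{(q^{m_1}-1)(q^{m_2}-1)}\,T(b)$; it remains to show $T(b)$ takes the two stated bracketed values. The summand $j=0$ contributes $1$.

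For $1\le j\le q$ I would apply Lemma 2.1 with $N=q+1$ to the base characters $\phi_1,\phi_2$, reading $G(\phi_i^{\,j})$ off its ``Furthermore'' clause and using $G(\bar\phi_1^{\,j})=G(\phi_1^{\,q+1-j})$. Because $e=2$ both $m_1,m_2$ are even; writing $q=p^{t}$ one checks $t$ is the least integer with $p^{t}\equiv-1\pmod{q+1}$ (any smaller exponent would make $p^{t'}+1$ a positive multiple of $q+1$ strictly below it) and that $q^{m_i}=p^{2t(m_i/2)}$, so the semi-primitive hypothesis holds with $\gamma_i=m_i/2$ and $\frac{p^{t}+1}{N}=1$. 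When both, or neither, of $G(\bar\phi_1^{\,j}),G(\phi_2^{\,j})$ fall in the exceptional first branch of Lemma 2.1, a short sign count shows the product $G(\bar\phi_1^{\,j})G(\phi_2^{\,j})$ equals $(-1)^{(m_1+m_2)/2}q^{(m_1+m_2)/2}$ independently of $j$; summing $\phi_1^{\,j}(b)$ over $j$ by orthogonality ($\sum_{j=0}^{q}\phi_1^{\,j}(b)$ is $q+1$ if $\phi_1(b)=1$ and $0$ otherwise) then yields the two stated values of $T(b)$, the larger occurring for the $\frac{q^{m_1}-1}{q+1}$ elements $b\in\ker\phi_1$ and the other for the remaining $\frac{q(q^{m_1}-1)}{q+1}$ elements.

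The main obstacle is the leftover situation where \emph{exactly one} of $G(\bar\phi_1^{\,j}),G(\phi_2^{\,j})$ lies in the exceptional branch: there the product is $-(-1)^{j}q^{(m_1+m_2)/2}$ and genuinely depends on $j$. This branch forces $q$ odd with $m_1/2,m_2/2$ of opposite parity, so $(m_1+m_2)/2$ is odd. I would then evaluate $\sum_{j=0}^{q}(-\phi_1(b))^{\,j}$, a geometric sum over $(q+1)$-th roots of unity (here $q+1$ is even, so $(-\phi_1(b))^{q+1}=1$), which equals $q+1$ precisely when $\phi_1(b)=-1$ and $0$ otherwise. The point to verify is that, although the distinguished fibre shifts from $\phi_1(b)=1$ to $\phi_1(b)=-1$, both fibres still have cardinality $\frac{q^{m_1}-1}{q+1}$, and that the parity of $(m_1+m_2)/2$ makes the two resulting values coincide with $1+(-1)^{(m_1+m_2)/2}q^{(m_1+m_2)/2+1}$ and $1+(-1)^{(m_1+m_2)/2+1}q^{(m_1+m_2)/2}$. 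Thus every sub-case collapses to the same two-valued distribution for $T(b)$, and multiplying by the prefactor $\frac{-(q^{m}-1)(q-1)}{(q^{m_1}-1)(q^{m_2}-1)}$ completes the proof.
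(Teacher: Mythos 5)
Your proposal is correct and follows essentially the same route as the paper: specialize Lemma 3.1 at $l=1$, kill the $\Bbb F_q$-Gauss sum factor, and evaluate the remaining order-$(q+1)$ Gauss sums via the semi-primitive case (Lemma 2.1), with the delicate mixed-parity situation (your ``exactly one exceptional branch'' case, the paper's $q$ odd, $m_2\equiv 0\pmod 4$ case) handled by noting the distinguished fibre shifts to $\phi_1(b)=-1$ while keeping size $\frac{q^{m_1}-1}{q+1}$. The only differences are cosmetic: you organize the cases by which Gauss sums land in the exceptional branch rather than by congruences on $m_2$ modulo $4$, and you sum the alternating character sum as a geometric series in $-\phi_1(b)$ where the paper splits odd and even powers of $\zeta_{q+1}^{s}$ explicitly.
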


\begin{proof}
If $l=1$, by Lemma 3.1, we have that $G(\bar{\lambda}^{\frac{m_2}{e} s})=-1$ and
$$\Omega(b)=\frac{-(q^{m}-1)(q-1)}{(q^{m_{1}}-1)(q^{m_{2}}-1)}\sum_{s\in S}G(\bar{\lambda}_{1}^{t_1 s})G(\lambda_{2}^{t_2 s})\lambda_{1}^{t_1 s}(b),$$ where $S=\{(q-1)j:j=0,1,\ldots,\frac{q^{e}-1}{q-1}-1\}$.
In the following, we discuss the value distribution of the exponential sum $\Omega(b)$ for $e=1,2$, respectively.

(1)Assume that $e=1$. It is clear that $S=\{0\}$. Then $$\Omega(b)=\frac{-(q^{m}-1)(q-1)}{(q^{m_{1}}-1)(q^{m_{2}}-1)}, b\in \Bbb F_{q^{m_1}}^{*}.$$

(2) Assume that $e=2$. Then we have $S=\{(q-1)j:j=0,1,\ldots,q\}$. Hence,
\begin{eqnarray*}\Omega(b)&=&\frac{-(q^{m}-1)(q-1)}{(q^{m_{1}}-1)(q^{m_{2}}-1)}\sum_{j=0}^{q}
G(\bar{\lambda}_{1}^{\frac{q^{m_1}-1}{q+1}j})G(\lambda_{2}^{\frac{q^{m_2}-1}{q+1}j})
\lambda_{1}^{\frac{q^{m_1}-1}{q+1}j}(b)\\
&=&\frac{-(q^{m}-1)(q-1)}{(q^{m_{1}}-1)(q^{m_{2}}-1)}(1+
\sum_{j=1}^{q}G(\bar{\lambda}_{1}^{\frac{q^{m_1}-1}{q+1}j})G(\lambda_{2}^{\frac{q^{m_2}-1}{q+1}j})
\lambda_{1}^{\frac{q^{m_1}-1}{q+1}j}(b)).
\end{eqnarray*}
Note that $\ord(\lambda_{2}^{\frac{q^{m_2}-1}{q+1}})=\ord(\lambda_{1}^{\frac{q^{m_1}-1}{q+1}})=q+1$. Now we give the value distribution of $\Omega(b)$ in several cases.
\begin{itemize}
\item If $q$ is even, by Lemma 2.1 we have
$$G(\bar{\lambda}_{1}^{\frac{q^{m_1}-1}{q+1}j})G(\lambda_{2}^{\frac{q^{m_2}-1}{q+1}j})=(-1)^{\frac{m_1+m_2}{2}}q^{\frac{m_1+m_2}{2}},1\leq j \leq q.$$
Then
$$\Omega(b)=\frac{-(q^{m}-1)(q-1)}{(q^{m_{1}}-1)(q^{m_{2}}-1)}(1+(-1)^{\frac{m_1+m_2}{2}}q^{\frac{m_1+m_2}{2}}
\sum_{j=1}^{q}
\lambda_{1}^{\frac{q^{m_1}-1}{q+1}j}(b)).$$
Let $b=\alpha_1^{s},0\leq s \leq q^{m_1}-2$. Then we have
\begin{eqnarray*}\sum_{j=1}^{q}
\lambda_{1}^{\frac{q^{m_1}-1}{q+1}j}(b)= \sum_{j=1}^{q}\zeta_{q+1}^{js}=\left\{
\begin{array}{ll}
q, & \mbox{if}\ s\equiv 0\pmod{q+1}\\
-1, & \mbox{otherwise}.\\
\end{array} \right.
\end{eqnarray*} Hence, the value distribution of $\Omega(b)$ is
\begin{eqnarray*}\Omega(b)=\left\{
\begin{array}{ll}
\frac{-(q^{m}-1)(q-1)}{(q^{m_{1}}-1)(q^{m_{2}}-1)}(1+(-1)^{\frac{m_1+m_2}{2}}q^{\frac{m_1+m_2}{2}+1}), & \frac{q^{m_1}-1}{q+1} \mbox{ times,}\\
\frac{-(q^{m}-1)(q-1)}{(q^{m_{1}}-1)(q^{m_{2}}-1)}(1+(-1)^{\frac{m_1+m_2}{2}+1}q^{\frac{m_1+m_2}{2}}), & \frac{q(q^{m_1}-1)}{q+1} \mbox{ times}.
\end{array} \right.
\end{eqnarray*}
\item If $q$ is odd and $m_2\equiv 0\pmod{4}$, we have $m_1\equiv 2\pmod{4}$ due to $\gcd(\frac{m_1}{2},q-1)=1$. Since $\frac{m_1}{2}$ is odd and $\frac{m_2}{2}$ is even, by Lemma 2.1 we have
    $$G(\bar{\lambda}_{1}^{\frac{q^{m_1}-1}{q+1}j})G(\lambda_{2}^{\frac{q^{m_2}-1}{q+1}j})=(-1)^{j+1}q^{\frac{m_1+m_2}{2}},1\leq j \leq q.$$ For $b=\alpha_1^{s},0\leq s \leq q^{m_1}-2$,
\begin{eqnarray*}\Omega(b)&=&\frac{-(q^{m}-1)(q-1)}{(q^{m_{1}}-1)(q^{m_{2}}-1)}(1+q^{\frac{m_1+m_2}{2}}
\sum_{j=1}^{q}(-1)^{j+1}
\lambda_{1}^{\frac{q^{m_1}-1}{q+1}j}(b))\\
&=&\frac{-(q^{m}-1)(q-1)}{(q^{m_{1}}-1)(q^{m_{2}}-1)}(1+q^{\frac{m_1+m_2}{2}}
\sum_{j=1}^{q}(-1)^{j+1}\zeta_{q+1}^{js}).
\end{eqnarray*}
For $s\equiv 0\pmod{q+1}$, we have
$$\sum_{j=1}^{q}(-1)^{j+1}\zeta_{q+1}^{js}=1\mbox{ and }\Omega(b)=\frac{-(q^{m}-1)(q-1)}{(q^{m_{1}}-1)(q^{m_{2}}-1)}(1+q^{\frac{m_1+m_2}{2}}).$$
For $s\equiv \frac{q+1}{2}\pmod{q+1}$, we have
$$\sum_{j=1}^{q}(-1)^{j+1}\zeta_{q+1}^{js}=-q \mbox{ and }\Omega(b)=\frac{-(q^{m}-1)(q-1)}{(q^{m_{1}}-1)(q^{m_{2}}-1)}(1-q^{\frac{m_1+m_2}{2}+1}).$$
For $s\not\equiv 0,\frac{q+1}{2}\pmod{q+1}$, one can see that
$$\zeta_{q+1}^{s}+\zeta_{q+1}^{3s}+\zeta_{q+1}^{5s}+\cdots+\zeta_{q+1}^{qs}=0$$ and
$$\zeta_{q+1}^{2s}+\zeta_{q+1}^{4s}+\zeta_{q+1}^{6s}+\cdots+\zeta_{q+1}^{(q-1)s}=-1.$$
This implies that $$\sum_{j=1}^{q}(-1)^{j+1}\zeta_{q+1}^{js}=1\mbox{ and }\Omega(b)=\frac{-(q^{m}-1)(q-1)}{(q^{m_{1}}-1)(q^{m_{2}}-1)}(1+q^{\frac{m_1+m_2}{2}}).$$
Hence, the value distribution of $\Omega(b)$ is
\begin{eqnarray*}\Omega(b)=\left\{
\begin{array}{ll}
\frac{(q^{m}-1)(q-1)}{(q^{m_{1}}-1)(q^{m_{2}}-1)}(q^{\frac{m_1+m_2}{2}+1}-1), & \frac{q^{m_1}-1}{q+1} \mbox{ times,}\\
\frac{-(q^{m}-1)(q-1)}{(q^{m_{1}}-1)(q^{m_{2}}-1)}(1+q^{\frac{m_1+m_2}{2}}), & \frac{q(q^{m_1}-1)}{q+1} \mbox{ times}.
\end{array} \right.
\end{eqnarray*}
\item If $q$ is odd and $m_2\equiv 2\pmod{4}$, we have $m_1\equiv 2\pmod{4}$ due to $\gcd(\frac{m_1}{2},q-1)=1$. In this case, the value distribution of $\Omega(b)$ can be obtained in a similar way. We omit the details here. The value distribution of $\Omega(b)$ is given as
    \begin{eqnarray*}\Omega(b)=\left\{
\begin{array}{ll}
\frac{-(q^{m}-1)(q-1)}{(q^{m_{1}}-1)(q^{m_{2}}-1)}(1+q^{\frac{m_1+m_2}{2}+1}), & \frac{q^{m_1}-1}{q+1} \mbox{ times,}\\
\frac{-(q^{m}-1)(q-1)}{(q^{m_{1}}-1)(q^{m_{2}}-1)}(1-q^{\frac{m_1+m_2}{2}}), & \frac{q(q^{m_1}-1)}{q+1} \mbox{ times}.
\end{array} \right.
\end{eqnarray*}
\end{itemize}
Note that the value distribution of $\Omega(b)$ can be represented in a unified form for $e=2$. The proof is completed.
\end{proof}

\begin{lem}
Let $l=2,e=1$and other notations and hypothesises be the same as those of Lemma 3.1. Then the value distribution of $\Omega(b),b\in \Bbb F_{q^{m_1}}^{*}$, is given as follows.
\begin{eqnarray*}\Omega(b)=\left\{
\begin{array}{ll}
\frac{(q^{m}-1)(q-1)}{(q^{m_{1}}-1)(q^{m_{2}}-1)}(-1-q^{\frac{m_1+m_2+1}{2}}), & \frac{q^{m_1}-1}{2} \mbox{ times},\\
\frac{(q^{m}-1)(q-1)}{(q^{m_{1}}-1)(q^{m_{2}}-1)}(-1+q^{\frac{m_1+m_2+1}{2}}), & \frac{q^{m_1}-1}{2} \mbox{ times}.\\
\end{array} \right.
\end{eqnarray*}
\end{lem}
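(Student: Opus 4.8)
The plan is to specialize the formula of Lemma 3.1 to the present parameters and then read off the two Gauss-sum contributions. First I would record the arithmetic constraints forced by $l=2$, $e=1$: since $l=\gcd(\frac{m_1}{e},q-1)=\gcd(m_1,q-1)=2$, the field must have $q$ odd and $m_1$ even, while $e=\gcd(m_1,m_2)=1$ then forces $m_2$ to be odd. Because $e=1$ we have $t_i=\frac{q^{m_i}-1}{q-1}$ and $\frac{m_2}{e}=m_2$, and the index set collapses to $S=\{\frac{q-1}{2}j:j=0,1\}=\{0,\frac{q-1}{2}\}$, so the sum in Lemma 3.1 has exactly two terms.

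Next I would evaluate the two terms. For $s=0$ every character is trivial, so the summand is $G(\bar\lambda_1^{0})G(\lambda_2^{0})\lambda_1^{0}(b)G(\bar\lambda^{0})=(-1)(-1)(1)(-1)=-1$. For $s=\frac{q-1}{2}$ the three exponents produce quadratic characters: $t_1 s=\frac{q^{m_1}-1}{2}$ yields the quadratic character $\eta_1$ of $\mathbb{F}_{q^{m_1}}^{*}$, $t_2 s=\frac{q^{m_2}-1}{2}$ yields the quadratic character $\eta_2$ of $\mathbb{F}_{q^{m_2}}^{*}$, and since $m_2$ is odd, $\frac{m_2}{e}s=m_2\cdot\frac{q-1}{2}\equiv\frac{q-1}{2}\pmod{q-1}$ yields the quadratic character $\eta$ of $\mathbb{F}_{q}^{*}$; moreover $\lambda_1^{t_1 s}(b)=\eta_1(b)$. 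Hence the $s=\frac{q-1}{2}$ summand equals $G(\eta_1)G(\eta_2)G(\eta)\,\eta_1(b)$.

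I would then invoke Lemma 2.2 to evaluate $G(\eta_1)G(\eta_2)G(\eta)$. Writing $q=p^{t}$, the three quadratic Gauss sums are $(-1)^{tm_1-1}(\sqrt{p^{*}})^{tm_1}$, $(-1)^{tm_2-1}(\sqrt{p^{*}})^{tm_2}$ and $(-1)^{t-1}(\sqrt{p^{*}})^{t}$. The crucial observation, and the step that keeps the argument clean, is that $m_1+m_2+1$ is even, so the total power $(\sqrt{p^{*}})^{t(m_1+m_2+1)}=(p^{*})^{t(m_1+m_2+1)/2}$ is a \emph{real} number of absolute value $q^{(m_1+m_2+1)/2}$. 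Thus $G(\eta_1)G(\eta_2)G(\eta)=\varepsilon\, q^{(m_1+m_2+1)/2}$ for a fixed sign $\varepsilon\in\{\pm1\}$ depending only on $p,t,m_1,m_2$, and I would deliberately refrain from computing $\varepsilon$ explicitly.

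Combining the two terms gives $\Omega(b)=C\bigl(-1+\varepsilon\, q^{(m_1+m_2+1)/2}\eta_1(b)\bigr)$, where $C=\frac{(q^{m}-1)(q-1)}{(q^{m_1}-1)(q^{m_2}-1)}$. To finish I would note that as $b$ runs over $\mathbb{F}_{q^{m_1}}^{*}$, the value $\eta_1(b)$ equals $+1$ on the $\frac{q^{m_1}-1}{2}$ nonzero squares and $-1$ on the $\frac{q^{m_1}-1}{2}$ nonsquares. Consequently the two possible values of $\Omega(b)$ are $C\bigl(-1\pm q^{(m_1+m_2+1)/2}\bigr)$, each attained $\frac{q^{m_1}-1}{2}$ times, which is precisely the asserted distribution. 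The only mildly delicate point is the sign bookkeeping of the quadratic Gauss sums together with the parity facts $m_1$ even, $m_2$ odd; but since $\varepsilon$ is absorbed by the symmetric two-valued factor $\eta_1(b)$, its exact value never enters the final answer, and this is exactly what makes the conclusion uniform.
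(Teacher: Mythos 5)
Your proposal is correct and follows essentially the same route as the paper: specialize Lemma 3.1 to get $S=\{0,\frac{q-1}{2}\}$, evaluate the $s=0$ term as $-1$, recognize the $s=\frac{q-1}{2}$ term as a product of three quadratic Gauss sums handled by Lemma 2.2, and let the equidistribution of $\lambda_1^{\frac{q^{m_1}-1}{2}}(b)=\pm 1$ over $\Bbb F_{q^{m_1}}^{*}$ produce the two frequencies $\frac{q^{m_1}-1}{2}$. The only cosmetic difference is that the paper computes the sign explicitly as $-(-1)^{\frac{(p-1)(m_1+m_2+1)t}{4}}$ whereas you leave it as an undetermined $\varepsilon\in\{\pm 1\}$; both arguments then absorb this sign into the symmetric two-valued factor, so the conclusion is identical.
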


\begin{proof}
Since $l=2,e=1$, by Lemma 3.1 we have that
$$\Omega(b)=\frac{(q^{m}-1)(q-1)}{(q^{m_{1}}-1)(q^{m_{2}}-1)}\sum_{s\in S}G(\bar{\lambda}_{1}^{t_1 s})G(\lambda_{2}^{t_2 s})\lambda_{1}^{t_1 s}(b)G(\bar{\lambda}^{m_2 s}),b\in \Bbb F_{q^{m_1}}^{*},$$ where $S=\{0,\frac{q-1}{2}\}$. It is clear that $m_1$ is even and $m_2$ is odd. Hence, by Lemma 2.2,
\begin{eqnarray*}
\Omega(b)&=&\frac{(q^{m}-1)(q-1)}{(q^{m_{1}}-1)(q^{m_{2}}-1)}(-1+G(\bar{\lambda}_{1}^{\frac{q^{m_1}-1}{2}})G(\lambda_2^{\frac{q^{m_2}-1}{2}})
\lambda_{1}^{\frac{q^{m_1}-1}{2}}(b)G(\bar{\lambda}^{\frac{q-1}{2}m_{2}}))\\
&=&\frac{(q^{m}-1)(q-1)}{(q^{m_{1}}-1)(q^{m_{2}}-1)}(-1+G(\bar{\lambda}_{1}^{\frac{q^{m_1}-1}{2}})G(\lambda_2^{\frac{q^{m_2}-1}{2}})
\lambda_{1}^{\frac{q^{m_1}-1}{2}}(b)G(\bar{\lambda}^{\frac{q-1}{2}}))\\
&=&\frac{(q^{m}-1)(q-1)}{(q^{m_{1}}-1)(q^{m_{2}}-1)}(-1-(-1)^{\frac{(p-1)(m_1+m_2+1)t}{4}}q^{\frac{m_1+m_2+1}{2}}\lambda_{1}^{\frac{q^{m_1}-1}{2}}(b))\\
&=&\left\{
\begin{array}{ll}
\frac{(q^{m}-1)(q-1)}{(q^{m_{1}}-1)(q^{m_{2}}-1)}(-1-q^{\frac{m_1+m_2+1}{2}}), & \frac{q^{m_1}-1}{2} \mbox{ times},\\
\frac{(q^{m}-1)(q-1)}{(q^{m_{1}}-1)(q^{m_{2}}-1)}(-1+q^{\frac{m_1+m_2+1}{2}}), & \frac{q^{m_1}-1}{2} \mbox{ times}.\\
\end{array} \right.
\end{eqnarray*}
\end{proof}
For $l=e=2$, the value distribution of $\Omega(b)$ can't be given because the Gauss sums of order $2(q+1)$ are unknown in general. However, for $e=1$ and $l=3,4$, we can easily obtain the value distributions of $\Omega(b)$ because the cubic and quartic Gauss sums are known. We omit the details here.

In the following, we begin to investigate the exponential sum $\Delta(b),b\in \Bbb F_{q^{m_1}}^{*}$.
\begin{lem}
Let $m,m_1,m_2$ be positive integers such that $m_{1}|m,m_2|m$. Denote $e=\gcd(m_1,m_2)$. Let $t_i=\frac{q^{m_i}-1}{q^{e}-1}$, $i=1,2$. For $b\in \Bbb F_{q^{m_1}}^{*}$, we have
$$\Delta(b)=\frac{(q^{m}-1)(q-1)^{2}}{(q^{m_{1}}-1)(q^{m_{2}}-1)}\sum_{s\in S}G(\bar{\lambda}_{1}^{t_1 s})G(\lambda_{2}^{t_2 s})\lambda_{1}^{t_1 s}(b),$$ where $S=\{(q-1)j:j=0,1,\ldots,\frac{q^{e}-1}{q-1}-1\}$ and $\widehat{\Bbb F}_{q^{m_i}}^{*}=\langle\lambda_i\rangle$, $i=1,2$.
\end{lem}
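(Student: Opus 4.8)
The plan is to run the argument of Lemma 3.1 almost verbatim, exploiting that $\Delta(b)$ is obtained from $\Omega(b)$ by deleting the factor $\chi(z)$ from the summand. First I would put $\Bbb F_{q^m}^*=\langle\alpha\rangle$, set $\alpha_1=\alpha^{(q^m-1)/(q^{m_1}-1)}$ and $\alpha_2=\alpha^{(q^m-1)/(q^{m_2}-1)}$ so that $\Bbb F_{q^{m_1}}^*=\langle\alpha_1\rangle$, $\Bbb F_{q^{m_2}}^*=\langle\alpha_2\rangle$, and replace the sum over $x$ by a sum over $i=0,\dots,q^m-2$. Expanding $\chi_1$ and $\chi_2$ by the Fourier formula (2.1) and using $\sum_{i=0}^{q^m-2}(\psi_1(\alpha_1)\psi_2(\alpha_2))^i=0$ unless $\psi_1(\alpha_1)\psi_2(\alpha_2)=1$, the same congruence analysis as before (writing $\psi_1=\lambda_1^u$, $\psi_2=\lambda_2^v$, solving $(q^{m_2}-1)u+(q^{m_1}-1)v\equiv0$, and using $\gcd(q^{m_1}-1,q^{m_2}-1)=q^e-1$) forces $u=t_1 s_1$ and $v\equiv -t_2 s_1\pmod{q^{m_2}-1}$, i.e.\ $\psi_2=\bar\lambda_2^{t_2 s_1}$, for $0\le s_1\le q^e-2$.

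The only structural difference appears at the next step: with no $\chi(z)$ present, the double sum over $y,z$ factors as
$$\Delta(b)=\frac{q^m-1}{(q^{m_1}-1)(q^{m_2}-1)}\sum_{s_1=0}^{q^e-2}G(\bar\lambda_1^{t_1 s_1})G(\lambda_2^{t_2 s_1})\lambda_1^{t_1 s_1}(b)\Big(\sum_{y\in\Bbb F_q^*}\lambda_1^{t_1 s_1}(y)\Big)\Big(\sum_{z\in\Bbb F_q^*}\bar\lambda_2^{t_2 s_1}(z)\Big),$$
so the inner $z$-sum, which became a Gauss sum $G(\bar\lambda^{(m_2/e)s_1})$ in Lemma 3.1, is now a bare character sum controlled by orthogonality. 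Taking $\beta=\alpha_1^{(q^{m_1}-1)/(q-1)}=\alpha_2^{(q^{m_2}-1)/(q-1)}$ and noting $t_i\equiv m_i/e\pmod{q-1}$, the restrictions of $\lambda_1^{t_1 s_1}$ and $\bar\lambda_2^{t_2 s_1}$ to $\Bbb F_q^*$ are $\lambda^{(m_1/e)s_1}$ and $\bar\lambda^{(m_2/e)s_1}$. Hence the $y$-sum equals $q-1$ exactly when $(m_1/e)s_1\equiv0\pmod{q-1}$ (and is $0$ otherwise), and the $z$-sum equals $q-1$ exactly when $(m_2/e)s_1\equiv0\pmod{q-1}$.

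The surviving indices are those meeting both congruences, and identifying this joint set is the only genuinely new point. Writing $d_i=\gcd(q-1,m_i/e)$, the conditions read $(q-1)/d_1\mid s_1$ and $(q-1)/d_2\mid s_1$. Since $e=\gcd(m_1,m_2)$ gives $\gcd(m_1/e,m_2/e)=1$, we get $\gcd(d_1,d_2)=1$, so $d_1d_2\mid q-1$ and $\operatorname{lcm}\big((q-1)/d_1,(q-1)/d_2\big)=q-1$; thus both congruences hold precisely when $s_1\equiv0\pmod{q-1}$, i.e.\ $s_1\in S$. Each surviving term then carries a factor $(q-1)^2$ from the two inner sums, and collecting constants yields the asserted expression for $\Delta(b)$. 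I expect this coprimality reconciliation of the two divisibility conditions to be the main (indeed the only nonroutine) obstacle; the rest is a direct transcription of the proof of Lemma 3.1.
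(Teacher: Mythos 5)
Your proposal is correct and follows essentially the same route as the paper's proof: repeat the Fourier-expansion and congruence analysis of Lemma 3.1, factor the now-decoupled $y$- and $z$-sums, apply orthogonality to each, and use $\gcd(m_1/e,m_2/e)=1$ to show the two divisibility conditions jointly force $s_1\equiv 0\pmod{q-1}$. Your lcm/gcd reconciliation is just a more explicit justification of the coprimality step the paper states in one line, so there is nothing substantively different.
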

\begin{proof}
For $\Bbb F_{q^{m}}^{*}=\langle \alpha\rangle$, let $\alpha_{1}=\alpha^{\frac{q^{m}-1}{q^{m_{1}}-1}}$ and $\alpha_{2}=\alpha^{\frac{q^{m}-1}{q^{m_{2}}-1}}$. Then we have $\Bbb F_{q^{m_1}}^{*}=\langle \alpha_{1}\rangle$ and $\Bbb F_{q^{m_2}}^{*}=\langle \alpha_{2}\rangle$. This implies that
\begin{eqnarray*}
\Delta(b)&=&\sum_{y,z\in \Bbb F_{q}^{*}}\sum_{i=0}^{q^{m}-2}\chi_{1}(yb\alpha^{\frac{q^{m}-1}{q^{m_{1}-1}}i})\chi_{2}(z\alpha^{\frac{q^{m}-1}{q^{m_{2}}-1}i})\\
&=&\sum_{y,z\in \Bbb F_{q}^{*}}\sum_{i=0}^{q^{m}-2}\chi_{1}(yb\alpha_{1}^{i})\chi_2(z\alpha_2^{i}).
\end{eqnarray*}
Using the Fourier expansion of additive characters (see Equation (2.1)), we have
\begin{eqnarray*}
\Delta(b)&=&\frac{1}{(q^{m_{1}}-1)(q^{m_{2}}-1)}\sum_{y,z\in \Bbb F_{q}^{*}}\sum_{i=0}^{q^{m}-2}\sum_{\psi_{1}\in \widehat{\Bbb F}_{q^{m_1}}^{*}}G(\bar{\psi}_{1})\psi_{1}(yb\alpha_{1}^{i})\sum_{\psi_{2}\in \widehat{\Bbb F}_{q^{m_2}}^{*}}G(\bar{\psi}_{2})\psi_{2}(z\alpha_{2}^{i})\\
&=&\frac{1}{(q^{m_{1}}-1)(q^{m_{2}}-1)}\sum_{y,z\in \Bbb F_{q}^{*}}\sum\limits_{\psi_{j}\in \widehat{\Bbb F}_{q^{m_j}}^{*}\atop j=1,2}G(\bar{\psi}_{1})G(\bar{\psi}_{2})\psi_{1}(yb)\psi_2(z)\sum_{i=0}^{q^{m}-2}\psi_{1}(\alpha_{1}^{i})\psi_{2}(\alpha_{2}^{i}).
\end{eqnarray*}
Let $t_i=\frac{q^{m_i}-1}{q^{e}-1}$, $i=1,2$. From the proof of Lemma 3.1, we know that
\begin{eqnarray*} \sum_{i=0}^{q^{m}-2}\psi_{1}(\alpha_{1}^{i})\psi_{2}(\alpha_{2}^{i})
=\left\{
\begin{array}{ll}
  q^{m}-1,   &      \mbox{if}\ \psi_{1}(\alpha_{1})\psi_{2}(\alpha_{2})=1,\\
0 & \mbox{otherwise},
\end{array} \right.
\end{eqnarray*}
and $\psi_{1}(\alpha_{1})\psi_{2}(\alpha_{2})=1$ if and only if $\psi_1=\lambda_1^{t_1s_1}$ and $\psi_2=\lambda_2^{t_2s_2}$, where $s_2=q^{e}-1-s_1$ and $0\leq s_1\leq q^{e}-2$. Hence,
\begin{eqnarray*}
\Delta(b)&=&\frac{q^{m}-1}{(q^{m_{1}}-1)(q^{m_{2}}-1)}\sum_{y,z\in \Bbb F_{q}^{*}}\sum_{s_1=0}^{q^{e}-2}G(\bar{\lambda}_{1}^{t_1 s_1})G(\lambda_{2}^{t_2 s_1})\lambda_{1}^{t_1 s_1}(yb)\bar{\lambda}_2^{t_2 s_1}(z)\\
&=&\frac{q^{m}-1}{(q^{m_{1}}-1)(q^{m_{2}}-1)}\sum_{s_1=0}^{q^{e}-2}G(\bar{\lambda}_{1}^{t_1 s_1})G(\lambda_{2}^{t_2 s_1})\lambda_{1}^{t_1 s_1}(b)\sum_{y\in \Bbb F_{q}^{*}}\lambda_{1}^{t_1 s_1}(y)\sum_{z\in \Bbb F_{q}^{*}}\bar{\lambda}_2^{t_2 s_1}(z).
\end{eqnarray*}
Assume that $\Bbb F_{q}^{*}=\langle\beta\rangle$, where $\beta=\alpha_{1}^{\frac{q^{m_1}-1}{q-1}}=\alpha_{2}^{\frac{q^{m_2}-1}{q-1}}$. Hence, $\lambda_1(\beta)=\lambda_2(\beta)=\zeta_{q-1}$. This implies that
\begin{eqnarray*} \sum_{y\in \Bbb F_{q}^{*}}\lambda_{1}^{t_1 s_1}(y)
&=&\sum_{i=0}^{q-2}\lambda_{1}^{t_1 s_1}(\beta^{i})=\sum_{i=0}^{q-2}(\zeta_{q-1}^{\frac{m_1s_1}{e}})^{i}\\
&=&\left\{
\begin{array}{ll}
  q-1,   &      \mbox{if}\ \frac{m_1s_1}{e}\equiv 0\pmod{q-1},\\
0 & \mbox{otherwise},
\end{array} \right.
\end{eqnarray*}
and
\begin{eqnarray*} \sum_{z\in \Bbb F_{q}^{*}}\bar{\lambda}_{2}^{t_1 s_1}(z)
&=&\sum_{i=0}^{q-2}\bar{\lambda}_{2}^{t_2 s_1}(\beta^{i})=\sum_{i=0}^{q-2}(\zeta_{q-1}^{-\frac{m_2s_1}{e}})^{i}\\
&=&\left\{
\begin{array}{ll}
  q-1,   &      \mbox{if}\ \frac{m_2s_1}{e}\equiv 0\pmod{q-1},\\
0 & \mbox{otherwise}.
\end{array} \right.
\end{eqnarray*}
Since $\gcd(\frac{m_1}{e},\frac{m_2}{e})=1$, the system
\begin{eqnarray*} \left\{
\begin{array}{l}
\frac{m_1s_1}{e}\equiv 0\pmod{q-1},\\
\frac{m_2s_1}{e}\equiv 0\pmod{q-1},\\
\end{array} \right.
\end{eqnarray*}
is equivalent to $s_1\equiv 0\pmod{q-1}$, where $0\leq s_1 \leq q^{e}-2$. Denote $S=\{s_{1}:s_1\equiv 0\pmod{q-1},0\leq s_1 \leq q^{e}-2\}$. Then we have that
\begin{eqnarray*}
\Delta(b)=\frac{(q^{m}-1)(q-1)^{2}}{(q^{m_{1}}-1)(q^{m_{2}}-1)}\sum_{s_1\in S}G(\bar{\lambda}_{1}^{t_1 s_1})G(\lambda_{2}^{t_2 s_1})\lambda_{1}^{t_1 s_1}(b).
\end{eqnarray*}
\end{proof}

For $e=1,2$, the value distribution of $\Delta(b)$ can be given as follows.

\begin{lem}
Let the notations be the same as those of Lemma 3.3. Then the value distribution of $\Delta(b),b\in \Bbb F_{q^{m_1}}^{*}$, is given as follows.
\begin{itemize}
\item [(1)] If $e=1$, then $\Delta(b)=\frac{(q^{m}-1)(q-1)^{2}}{(q^{m_{1}}-1)(q^{m_{2}}-1)}$ for all $b\in \Bbb F_{q^{m_1}}^{*}$.
\item [(2)] If $e=2$, then
\begin{eqnarray*}\Delta(b)=\left\{
\begin{array}{ll}
\frac{(q^{m}-1)(q-1)^{2}}{(q^{m_{1}}-1)(q^{m_{2}}-1)}(1+(-1)^{\frac{m_1+m_2}{2}}q^{\frac{m_1+m_2}{2}+1}), & \frac{q^{m_1}-1}{q+1} \mbox{ times}\\
\frac{(q^{m}-1)(q-1)^{2}}{(q^{m_{1}}-1)(q^{m_{2}}-1)}(1+(-1)^{\frac{m_1+m_2}{2}+1}q^{\frac{m_1+m_2}{2}}), & \frac{q(q^{m_1}-1)}{q+1} \mbox{ times}.
\end{array} \right.
\end{eqnarray*}
\end{itemize}
\end{lem}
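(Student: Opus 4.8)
The plan is to piggyback on the computation already carried out for $\Omega(b)$ in Lemma 3.2, since the two exponential sums differ only by a scalar factor. Starting from the closed form given by Lemma 3.3, I would isolate the inner sum
$$T(b)=\sum_{s\in S}G(\bar{\lambda}_{1}^{t_1 s})G(\lambda_{2}^{t_2 s})\lambda_{1}^{t_1 s}(b),\qquad S=\{(q-1)j:j=0,1,\ldots,\tfrac{q^{e}-1}{q-1}-1\},$$
so that $\Delta(b)=\frac{(q^{m}-1)(q-1)^{2}}{(q^{m_{1}}-1)(q^{m_{2}}-1)}\,T(b)$. The key observation is that this $S$ and this $T(b)$ are \emph{identical} to the ones governing $\Omega(b)$ in the case $l=1$ treated in Lemma 3.2: for $\Delta(b)$ the two orthogonality relations $\frac{m_1 s}{e}\equiv 0$ and $\frac{m_2 s}{e}\equiv 0\pmod{q-1}$, together with $\gcd(\frac{m_1}{e},\frac{m_2}{e})=1$, force $s\equiv 0\pmod{q-1}$ \emph{independently of} $l$, reproducing exactly the set $S$ that the hypothesis $l=1$ yields for $\Omega(b)$. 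Since on that set $G(\bar{\lambda}^{\frac{m_2}{e}s})=-1$, comparison with Lemma 3.2 even gives the clean identity $\Delta(b)=-(q-1)\,\Omega(b)$; in particular the value distribution of $T(b)$ over $b\in \Bbb F_{q^{m_1}}^{*}$ is already in hand, and I need only reattach the new prefactor.

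For $e=1$ the set degenerates to $S=\{0\}$, whence $T(b)=G(\psi_{0})^{2}=1$ and $\Delta(b)=\frac{(q^{m}-1)(q-1)^{2}}{(q^{m_{1}}-1)(q^{m_{2}}-1)}$ for every $b$, which is part (1). For $e=2$ I would use $S=\{(q-1)j:0\le j\le q\}$ and rewrite the two characters as $\lambda_{1}^{t_1 s}=\lambda_{1}^{\frac{q^{m_1}-1}{q+1}j}$ and $\lambda_{2}^{t_2 s}=\lambda_{2}^{\frac{q^{m_2}-1}{q+1}j}$, both of order $q+1$. Since $q\equiv -1\pmod{q+1}$, the products $G(\bar{\lambda}_{1}^{\frac{q^{m_1}-1}{q+1}j})G(\lambda_{2}^{\frac{q^{m_2}-1}{q+1}j})$ are evaluated by the semi-primitive case of Lemma 2.1; when $q$ is even this product equals the single constant $(-1)^{\frac{m_1+m_2}{2}}q^{\frac{m_1+m_2}{2}}$ for all $j$, so $T(b)$ becomes $1$ plus that constant times the geometric character sum $\sum_{j=1}^{q}\lambda_{1}^{\frac{q^{m_1}-1}{q+1}j}(b)$. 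Writing $b=\alpha_{1}^{s}$, this character sum equals $q$ when $(q+1)\mid s$ and $-1$ otherwise, the two outcomes occurring $\frac{q^{m_1}-1}{q+1}$ and $\frac{q(q^{m_1}-1)}{q+1}$ times, which yields the two values of part (2) after multiplication by the prefactor.

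The only delicate ingredient is the one already settled inside Lemma 3.2: when $q$ is odd the semi-primitive product carries an extra sign $(-1)^{j+1}$, so it must be evaluated according to $m_{2}\bmod 4$ (with $m_{1}\equiv 2\pmod 4$ forced by $\gcd(\frac{m_1}{2},q-1)=1$), and one must check that the resulting subsums over $j$ telescope into the very same two-valued distribution with the same frequencies $\frac{q^{m_1}-1}{q+1}$ and $\frac{q(q^{m_1}-1)}{q+1}$. Because $T(b)$ is literally the same object here as for $\Omega(b)$, I would quote that verification rather than repeat it, and then multiply each of its two values by $\frac{(q^{m}-1)(q-1)^{2}}{(q^{m_{1}}-1)(q^{m_{2}}-1)}$ to obtain the stated unified distribution of $\Delta(b)$, completing the proof.
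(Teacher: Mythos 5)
Your reduction is the right idea, and it is essentially the paper's own approach (the paper's proof of this lemma is literally one line: ``similar to that of Lemma 3.2''). Part (1), and the $q$ even half of part (2), are fine as you argue them. But there is a genuine gap in your treatment of $q$ odd: the present lemma carries \emph{no} hypothesis on $l$, whereas Lemma 3.2's evaluation of the inner sum $T(b)$ was carried out under $l=1$, and for $q$ odd that hypothesis was used in an essential way to force $m_1\equiv 2\pmod 4$. Your parenthetical ``with $m_{1}\equiv 2\pmod 4$ forced by $\gcd(\frac{m_1}{2},q-1)=1$'' imports a hypothesis you do not have here. Concretely, with $e=2$ and $q$ odd the lemma must also cover $m_1\equiv 0\pmod 4$, $m_2\equiv 2\pmod 4$ (for example $q=3$, $m_1=4$, $m_2=2$, $m=4$, where $l=2$); this parameter range lies outside Lemma 3.2, so ``quoting that verification'' proves nothing for it. For the same reason, the identity $\Delta(b)=-(q-1)\,\Omega(b)$ is not available in the generality of this lemma: it holds only when $l=1$, since otherwise $\Omega(b)$ is a sum over the strictly larger set of multiples of $\frac{q-1}{l}$ and the two formulas no longer match term by term.

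The gap is easy to close, but it must be closed by a computation, not a citation. When $m_1/2$ is even and $m_2/2$ is odd (note $\gcd(m_1/2,m_2/2)=1$ rules out both being even), Lemma 2.1 gives $G(\bar{\lambda}_{1}^{\frac{q^{m_1}-1}{q+1}j})=-q^{m_1/2}$, constant in $j$, while the alternating sign now comes from the \emph{other} factor, $G(\lambda_{2}^{\frac{q^{m_2}-1}{q+1}j})=(-1)^{j}q^{m_2/2}$; the product is again $(-1)^{j+1}q^{\frac{m_1+m_2}{2}}$ with $\frac{m_1+m_2}{2}$ odd, i.e.\ formally identical to the case $m_1\equiv 2$, $m_2\equiv 0\pmod 4$ treated in Lemma 3.2, so the sum over $j$ collapses to the same two values $1+q^{\frac{m_1+m_2}{2}}$ (occurring $\frac{q(q^{m_1}-1)}{q+1}$ times) and $1-q^{\frac{m_1+m_2}{2}+1}$ (occurring $\frac{q^{m_1}-1}{q+1}$ times), which is the stated unified form with $(-1)^{\frac{m_1+m_2}{2}}=-1$. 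Once you add this third parity case, and observe that for $m_1/2$ odd the computation inside Lemma 3.2's proof uses only the parities of $m_1/2$ and $m_2/2$ (never $l=1$ itself), so that it transfers verbatim even when $l>1$ is odd, your argument is complete.
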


\begin{proof}
The proof is similar to that of Lemma 3.2. We omit the details here.
\end{proof}

\section{The weight distribution of $\mathcal{C}_{D}$}
In this section, we give the weight distribution of $\mathcal{C}_{D}$ defined in Equation (1.1) in some special cases. The well-known Griesmer bound of linear codes is the following.
\begin{lem}
\cite[Griesmer bound]{MS} For an $[n,k,d]$ code over $\Bbb F_{q}$, we have
$$n\geq \sum_{i=0}^{k-1}\lceil d/q^{i}\rceil,$$
where $\lceil x\rceil$ denotes the smallest integer which is larger than or equal to $x$.
\end{lem}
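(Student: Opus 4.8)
The plan is to prove the bound by induction on the dimension $k$, using the classical device of the \emph{residual code}. For the base case $k=1$ the statement is immediate: a nonzero codeword has weight at most $n$, so $n\geq d=\lceil d/q^{0}\rceil$. For the inductive step I would start from an $[n,k,d]$ code $\mathcal{C}$ over $\Bbb F_{q}$ with $k\geq 2$, fix a codeword $c$ of minimum weight $d$, and — after permuting coordinates and rescaling each column by a nonzero scalar, which leaves the weight distribution unchanged — assume $c=(1,\ldots,1,0,\ldots,0)$ whose support $S$ consists of the first $d$ coordinates. The residual code $\mathcal{C}'$ is obtained by deleting the coordinates in $S$ from every codeword of $\mathcal{C}$, so it has length $n-d$.

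Next I would verify that $\dim\mathcal{C}'=k-1$. The kernel of the projection $\mathcal{C}\to\mathcal{C}'$ is the set of codewords supported inside $S$; any nonzero such codeword $x$ has weight at most $|S|=d$, hence weight exactly $d$ and support equal to $S$, and subtracting $(x_i/c_i)\,c$ for some $i\in S$ produces a codeword of weight strictly less than $d$, which must vanish, forcing $x\in\langle c\rangle$. Thus the kernel is exactly $\langle c\rangle$ and the dimension drops by precisely one.

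The core step is the lower bound $\lceil d/q\rceil$ for the minimum distance of $\mathcal{C}'$, and this is where I expect the real work to lie. Given $\tilde c\in\mathcal{C}$ whose projection is a nonzero word of $\mathcal{C}'$ of weight $w_{0}$, let $n_{\alpha}$ count the positions $i\in S$ with $\tilde c_i=\alpha$, so $\sum_{\alpha\in\Bbb F_{q}}n_{\alpha}=d$. For each scalar $\alpha$ the codeword $\tilde c-\alpha c$ has weight $w_{0}+(d-n_{\alpha})$, which is at least $d$ whenever $\tilde c-\alpha c\neq 0$. By pigeonhole some $\alpha$ satisfies $n_{\alpha}\geq\lceil d/q\rceil$; because $w_{0}>0$ this $\tilde c-\alpha c$ is nonzero (it is nonzero off $S$), so $w_{0}+d-n_{\alpha}\geq d$ gives $w_{0}\geq\lceil d/q\rceil$. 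Hence $\mathcal{C}'$ is an $[n-d,\,k-1,\,\geq\lceil d/q\rceil]$ code. The hard part is exactly this averaging argument, where the chosen $\alpha$ must simultaneously realise the pigeonhole count and keep $\tilde c-\alpha c$ nonzero; the fact that $w_{0}>0$ forces nonvanishing on the complement of $S$ is what rescues the estimate.

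Finally I would apply the induction hypothesis to $\mathcal{C}'$, obtaining $n-d\geq\sum_{i=0}^{k-2}\lceil\lceil d/q\rceil/q^{i}\rceil$, and then invoke the nested-ceiling identity $\lceil\lceil d/q\rceil/q^{i}\rceil=\lceil d/q^{\,i+1}\rceil$ (valid for positive integers) to rewrite the right-hand side as $\sum_{i=1}^{k-1}\lceil d/q^{i}\rceil$. Adding $d=\lceil d/q^{0}\rceil$ to both sides then recovers $n\geq\sum_{i=0}^{k-1}\lceil d/q^{i}\rceil$, completing the induction.
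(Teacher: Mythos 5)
Your proof is correct, but there is nothing in the paper to compare it against: the paper does not prove this lemma at all, it simply quotes the classical Griesmer bound from the reference [MS] (MacWilliams--Sloane) and later uses it only as a benchmark to certify that certain constructed codes are optimal or almost optimal. What you have written is the standard textbook proof by induction on the dimension via the residual code, and it checks out in every step: the kernel of the projection $\mathcal{C}\to\mathcal{C}'$ is exactly $\langle c\rangle$ (any nonzero codeword supported in $S$ must have weight exactly $d$ and support $S$, and subtracting a suitable multiple of $c$ kills it), so the dimension drops by precisely one; the pigeonhole step is sound, since $\mathrm{wt}(\tilde c-\alpha c)=w_{0}+(d-n_{\alpha})$ and the projection of $\tilde c-\alpha c$ agrees with that of $\tilde c$, hence is nonzero for every $\alpha$, giving $w_{0}\geq n_{\alpha}$ for all $\alpha$ and in particular $w_{0}\geq\lceil d/q\rceil$; and the nested-ceiling identity $\lceil\lceil d/q\rceil/q^{i}\rceil=\lceil d/q^{i+1}\rceil$ together with monotonicity of the Griesmer sum in the minimum distance closes the induction. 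In short, you have supplied (correctly) the proof that the paper delegates to the literature; your argument is essentially the one found in the cited reference, so it is the canonical route rather than a novel one.
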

\subsection{The case $a=0$}
In the following, we determine the weight distribution of $\mathcal{C}_{D}$ for $a=0$.

Denote $n=|D|=|\{x\in \Bbb F_{q^{m}}^{*}:\Tr_{q^{m_{2}}/q}(\mathrm{N}_{q^{m}/q^{m_{2}}}(x))=0\}|$. Since the norm function
$$\mathrm{N}_{q^{m}/q^{m_{2}}}: \Bbb F_{q^{m}}^{*}\longrightarrow \Bbb F_{q^{m_2}}^{*},x\longmapsto x^{\frac{q^{m}-1}{q^{m_2}-1}},$$
is an epimorphism of two multiplicative groups and the trace function $\Tr_{q^{m_{2}}/q}:\Bbb F_{q^{m_2}}\longrightarrow \Bbb F_{q}$ is an epimorphism of two additive groups, we have
\begin{eqnarray}n=|\ker(\mathrm{N}_{q^{m}/q^{m_{2}}})|\cdot (|\ker(\Tr_{q^{m_{2}}/q})|-1)=\frac{q^{m}-1}{q^{m_2}-1}(q^{m_2-1}-1).\end{eqnarray}
Note that $n=0$ when $m_2=1$. Hence, we always assume that $m_2>1$ in this section.
For $b\in \Bbb F_{q^{m_1}}^{*}$, we denote
$$N(b)=|\{x\in \Bbb F_{q^{m}}^{*}:\Tr_{q^{m_{2}}/q}(\mathrm{N}_{q^{m}/q^{m_{2}}}(x))=0\mbox{ and }\Tr_{q^{m_1}/q}(b\mathrm{N}_{q^{m}/q^{m_{1}}}(x))=0\}|.$$
By the basic facts of additive characters, we have that
\begin{eqnarray*}
N(b)&=&\frac{1}{q^{2}}\sum_{x\in \Bbb F_{q^{m}}^{*}}\sum_{y,z\in \Bbb F_{q}}\chi(y\Tr_{q^{m_1}/q}(b\mathrm{N}_{q^{m}/q^{m_{1}}}(x)))\chi(z\Tr_{q^{m_{2}}/q}(\mathrm{N}_{q^{m}/q^{m_{2}}}(x)))\\
&=&\frac{1}{q^{2}}\sum_{x\in \Bbb F_{q^{m}}^{*}}\sum_{y,z\in \Bbb F_{q}}\chi_{1}(ybx^{\frac{q^{m}-1}{q^{m_1}-1}})\chi_2(zx^{\frac{q^{m}-1}{q^{m_2}-1}})\\
&=&\frac{q^{m}-1}{q^{2}}+\frac{1}{q^{2}}\sum_{x\in \Bbb F_{q^{m}}^{*}}\sum_{y\in \Bbb F_{q}^{*}}\chi_{1}(ybx^{\frac{q^{m}-1}{q^{m_1}-1}})+\frac{1}{q^{2}}\sum_{x\in \Bbb F_{q^{m}}^{*}}\sum_{z\in \Bbb F_{q}^{*}}\chi_2(zx^{\frac{q^{m}-1}{q^{m_2}-1}})\\
& &+\frac{1}{q^{2}}\sum_{x\in \Bbb F_{q^{m}}^{*}}\sum_{y,z\in \Bbb F_{q}^{*}}\chi_{1}(ybx^{\frac{q^{m}-1}{q^{m_1}-1}})\chi_2(zx^{\frac{q^{m}-1}{q^{m_2}-1}})\\
&=&\frac{q^{m}-1}{q^{2}}+\frac{1}{q^{2}}\sum_{x\in \Bbb F_{q^{m}}^{*}}\sum_{y\in \Bbb F_{q}^{*}}\chi_{1}(ybx^{\frac{q^{m}-1}{q^{m_1}-1}})+\frac{1}{q^{2}}\sum_{x\in \Bbb F_{q^{m}}^{*}}\sum_{z\in \Bbb F_{q}^{*}}\chi_2(zx^{\frac{q^{m}-1}{q^{m_2}-1}})+\frac{1}{q^{2}}\Delta(b).
\end{eqnarray*}
Note that the norm function $\mathrm{N}_{q^{m}/q^{m_{2}}}$ is an epimorphism. Hence,
\begin{eqnarray*}\sum_{x\in \Bbb F_{q^{m}}^{*}}\sum_{z\in \Bbb F_{q}^{*}}\chi_{2}(zx^{\frac{q^{m}-1}{q^{m_2}-1}})&=&\frac{q^{m}-1}{q^{m_2}-1}\sum_{x\in \Bbb F_{q^{m}}^{*}}\sum_{z\in \Bbb F_{q}^{*}}\chi_{2}(zx)\\
&=&\frac{q^{m}-1}{q^{m_2}-1}\sum_{z\in \Bbb F_{q}^{*}}\sum_{x\in \Bbb F_{q^{m}}^{*}}\chi_{2}(zx)=\frac{(q-1)(1-q^{m})}{q^{m_2}-1}.
\end{eqnarray*}
Similarly,
\begin{eqnarray*}\sum_{x\in \Bbb F_{q^{m}}^{*}}\sum_{y\in \Bbb F_{q}^{*}}\chi_{1}(ybx^{\frac{q^{m}-1}{q^{m_1}-1}})&=&\frac{q^{m}-1}{q^{m_1}-1}\sum_{x\in \Bbb F_{q^{m}}^{*}}\sum_{y\in \Bbb F_{q}^{*}}\chi_{1}(ybx)\\
&=&\frac{q^{m}-1}{q^{m_1}-1}\sum_{y\in \Bbb F_{q}^{*}}\sum_{x\in \Bbb F_{q^{m}}^{*}}\chi_{1}(ybx)=\frac{(q-1)(1-q^{m})}{q^{m_1}-1}.
\end{eqnarray*}
From the discussions above, we obtain that
\begin{eqnarray}
N(b)=\frac{q^{m}-1}{q^{2}}(1-\frac{q-1}{q^{m_2}-1}-\frac{q-1}{q^{m_1}-1})+\frac{1}{q^{2}}\Delta(b).
\end{eqnarray}

For any $b\in \Bbb F_{q^{m_1}}^{*}$, the weight of a codeword
$$\textbf{c}(b)=(\Tr_{q^{m_{1}}/q}(b\mathrm{N}_{q^{m}/q^{m_{1}}}(d_{1})),\cdots,\Tr_{q^{m_{1}}/q}(b\mathrm{N}_{q^{m}/q^{m_{1}}}(d_{n})))$$
equals
\begin{eqnarray}w_{H}(\textbf{c}(b))=n-N(b)=\frac{(q-1)(q^{m}-1)(q^{m_1+m_2}-q^{m_1+1}+q-1)}{q^{2}(q^{m_1}-1)(q^{m_2}-1)}-\frac{1}{q^{2}}\Delta(b)
\end{eqnarray}
by Equations (4.1) and (4.2). Hence, by Lemma 3.5, the parameters of $\mathcal{C}_{D}$ for $e=1$ are
$$[\frac{q^{m}-1}{q^{m_2}-1}(q^{m_2-1}-1),m_1,\frac{q^{m_1-1}(q-1)(q^{m}-1)(q^{m_2-1}-1)}{(q^{m_1}-1)(q^{m_2}-1)}].$$
Then $\mathcal{C}_{D}$ is an optimal one-weight linear code with respect to the Griesmer bound. However, any one-weight linear code is not new because it is equivalent to a concatenated version of a simplex code. For $e=2$, the weight distribution of $\mathcal{C}_{D}$ is given in the following.

\begin{thm}
Let $m,m_1,m_2$ be positive integers such that $m_1|m$, $m_2|m$, and $m_2>1$. Denote $\gcd(m_1,m_2)=e$. Let $\mathcal{C}_{D}$ be the linear code defined in Equation (1.1) for $a=0$. If $e=2$ and $(m_1,m_2)\neq (2,2)$, then $\mathcal{C}_{D}$ is a two-weight linear code with parameters $[\frac{q^{m}-1}{q^{m_2}-1}(q^{m_2-1}-1),m_1]$ and its weight enumerator is given by Table I.
\end{thm}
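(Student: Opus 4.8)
The plan is to extract the weight distribution directly from the weight formula~(4.3) together with the value distribution of $\Delta(b)$ recorded in Lemma 3.5. By~(4.3), for every $b\in\Bbb F_{q^{m_1}}^{*}$ the codeword $\mathbf{c}(b)$ has Hamming weight
$$w_{H}(\mathbf{c}(b))=\frac{(q-1)(q^{m}-1)(q^{m_1+m_2}-q^{m_1+1}+q-1)}{q^{2}(q^{m_1}-1)(q^{m_2}-1)}-\frac{1}{q^{2}}\Delta(b),$$
so $w_{H}(\mathbf{c}(b))$ is an affine, strictly decreasing function of $\Delta(b)$. Since $e=2$, part~(2) of Lemma 3.5 shows that $\Delta(b)$ takes exactly two values as $b$ ranges over $\Bbb F_{q^{m_1}}^{*}$, with frequencies $\frac{q^{m_1}-1}{q+1}$ and $\frac{q(q^{m_1}-1)}{q+1}$. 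Hence $w_{H}(\mathbf{c}(b))$ also takes exactly two values, with the same two frequencies, and the first step is simply to substitute the two values of $\Delta(b)$ into the displayed formula.

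Carrying out the substitution and cancelling the common factor $q-1$ yields the two weights
$$w_{1}=\frac{(q-1)(q^{m}-1)\bigl(q^{m_1+1}(q^{m_2-1}-1)-(-1)^{\frac{m_1+m_2}{2}}(q-1)q^{\frac{m_1+m_2}{2}+1}\bigr)}{q^{2}(q^{m_1}-1)(q^{m_2}-1)},$$
$$w_{2}=\frac{(q-1)(q^{m}-1)\bigl(q^{m_1+1}(q^{m_2-1}-1)+(-1)^{\frac{m_1+m_2}{2}}(q-1)q^{\frac{m_1+m_2}{2}}\bigr)}{q^{2}(q^{m_1}-1)(q^{m_2}-1)},$$
occurring $\frac{q^{m_1}-1}{q+1}$ and $\frac{q(q^{m_1}-1)}{q+1}$ times respectively; these are the entries of Table~I. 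They are distinct, since the correction terms differ in sign and in the power of $q$. The essential point is to verify that both weights are \emph{strictly positive}, so that $\mathcal{C}_{D}$ is a genuine two-weight code: the leading term $q^{m_1+1}(q^{m_2-1}-1)$ has order $q^{m_1+m_2}$, whereas the correction term has order at most $q^{\frac{m_1+m_2}{2}+2}$, and $m_1+m_2>\frac{m_1+m_2}{2}+2$ holds precisely when $m_1+m_2>4$. Because $e=2$ forces both $m_1$ and $m_2$ to be even, the only pair with $m_1+m_2\le 4$ is $(m_1,m_2)=(2,2)$; a direct check shows that in that excluded case the bracket defining $w_{1}$ collapses to $q^{3}(q-1)-(q-1)q^{3}=0$, so one weight vanishes. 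This is exactly why the hypothesis $(m_1,m_2)\neq(2,2)$ is imposed, and under it both $w_1,w_2>0$.

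It remains to determine the dimension and the multiplicity of the zero codeword. The map $b\mapsto\mathbf{c}(b)$ is $\Bbb F_{q}$-linear from $\Bbb F_{q^{m_1}}$ to $\Bbb F_{q}^{n}$, because $\Tr_{q^{m_1}/q}$ is $\Bbb F_{q}$-linear and $\mathbf{c}(b)$ depends $\Bbb F_{q}$-linearly on $b$. Its kernel is $\{b:w_{H}(\mathbf{c}(b))=0\}$; having shown that $w_{H}(\mathbf{c}(b))>0$ for all $b\neq 0$, the kernel is trivial, the map is injective, and therefore $\dim_{\Bbb F_{q}}\mathcal{C}_{D}=m_1$ and $|\mathcal{C}_{D}|=q^{m_1}$. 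In particular distinct nonzero values of $b$ give distinct nonzero codewords, so the two frequencies computed for $\Delta(b)$ transfer verbatim to the two nonzero weights, while $b=0$ contributes the unique zero codeword, giving $A_{0}=1$. Together with the length $n=\frac{q^{m}-1}{q^{m_2}-1}(q^{m_2-1}-1)$ from~(4.1), this is precisely the weight enumerator displayed in Table~I.

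The main obstacle is the middle step: the substitution itself is mechanical, but putting the two weights into the clean closed form of Table~I, and then proving that both are strictly positive, requires a careful (if elementary) comparison of the dominant term against the Gauss-sum correction term. It is exactly this positivity analysis that isolates the single degenerate pair $(m_1,m_2)=(2,2)$ where a weight drops to zero (and where, consequently, the dimension would fall below $m_1$), so this is where the hypothesis $(m_1,m_2)\neq(2,2)$ is genuinely used.
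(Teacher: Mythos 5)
Your proposal is correct and takes essentially the same route as the paper: the paper's (two-sentence) proof likewise combines Lemma 3.5 with Equation (4.3) to read off the two weights and frequencies of Table I, and then notes that $w_{H}(\mathbf{c}(b))>0$ for all $b\in\Bbb F_{q^{m_1}}^{*}$ when $(m_1,m_2)\neq(2,2)$, so the dimension is $m_1$. You have simply filled in the substitution algebra, the positivity comparison isolating the degenerate pair $(2,2)$, and the injectivity argument that the paper leaves as ``easy to verify.''
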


\[ \small\begin{tabular} {c} Table I. Weight distribution of the code in Theorem 4.2.  \\
\begin{tabular}{c|c}
   \hline
 Weight & Frequency \\
  \hline
        0   &    1\\
  $\frac{q^{m_1-1}(q-1)(q^{m}-1)(q^{m_2-1}-1-(-1)^{\frac{m_1+m_2}{2}}(q-1)q^{\frac{m_2-m_1}{2}})}{(q^{m_1}-1)(q^{m_2}-1)}$  & $\frac{q^{m_1}-1}{q+1}$ \\
  $\frac{q^{m_1-1}(q-1)(q^{m}-1)(q^{m_2-1}-1-(-1)^{\frac{m_1+m_2}{2}+1}(q-1)q^{\frac{m_2-m_1}{2}-1})}{(q^{m_1}-1)(q^{m_2}-1)}$  & $\frac{q(q^{m_1}-1)}{q+1}$ \\
   \hline
\end{tabular}
\end{tabular}
\]

\begin{proof}
For $e=2$, the weight distributions of $\mathcal{C}_{D}$ can be obtained by Lemma 3.5 and Equation (4.3). It is easy to verify that $w_{H}(\textbf{c}_{b})>0$ for all $b\in \Bbb F_{q^{m_1}}^{*}$ if $(m_1,m_2)\neq (2,2)$, then the dimension equals $m_1$.
\end{proof}

\begin{exa}
Let $m_1=2,m_2=4,m=4$. If $q=2$, then $\mathcal{C}_{D}$ in Theorem 4.2 is an optimal  $[7,2,4]$ two-weight linear code according to the Griesmer bound and has weight enumerator $1+2z^{4}+z^{6}$. If $q=3$, then $\mathcal{C}_{D}$ in Theorem 4.2 is an almost optimal  $[26,2,18]$ two-weight linear code according to the Griesmer bound and has weight enumerator $1+6z^{18}+2z^{24}$.
\end{exa}

\begin{exa}
Let $m_1=4,m_2=6,m=12$ and $q=2$. Then $\mathcal{C}_{D}$ in Theorem 4.2 is a  $[2015,4,1040]$ two-weight linear code. Its weight enumerator is given by
$1+10z^{1040}+5z^{1144}$.
\end{exa}

\subsection{The case $a=1$}
In the following, we determine the weight distribution of $\mathcal{C}_{D}$ for $a=1$.

Denote $n=|D|=|\{x\in \Bbb F_{q^{m}}^{*}:\Tr_{q^{m_{2}}/q}(\mathrm{N}_{q^{m}/q^{m_{2}}}(x))+1=0\}|$. It is clear that
\begin{eqnarray}n=|\ker(\mathrm{N}_{q^{m}/q^{m_{2}}})|\cdot |\ker(\Tr_{q^{m_{2}}/q})|=\frac{q^{m_2-1}(q^{m}-1)}{q^{m_2}-1}.\end{eqnarray}
For $b\in \Bbb F_{q^{m_1}}^{*}$, we denote
$$N(b)=|\{x\in \Bbb F_{q^{m}}^{*}:\Tr_{q^{m_{2}}/q}(\mathrm{N}_{q^{m}/q^{m_{2}}}(x))+1=0\mbox{ and }\Tr_{q^{m_1}/q}(b\mathrm{N}_{q^{m}/q^{m_{1}}}(x))=0\}|.$$
By the basic facts of additive characters, we have that
\begin{eqnarray*}
N(b)&=&\frac{1}{q^{2}}\sum_{x\in \Bbb F_{q^{m}}^{*}}\sum_{y,z\in \Bbb F_{q}}\chi(y\Tr_{q^{m_1}/q}(b\mathrm{N}_{q^{m}/q^{m_{1}}}(x)))\chi(z\Tr_{q^{m_{2}}/q}(\mathrm{N}_{q^{m}/q^{m_{2}}}(x))+z)\\
&=&\frac{1}{q^{2}}\sum_{x\in \Bbb F_{q^{m}}^{*}}\sum_{y,z\in \Bbb F_{q}}\chi_{1}(ybx^{\frac{q^{m}-1}{q^{m_1}-1}})\chi_2(zx^{\frac{q^{m}-1}{q^{m_2}-1}})\chi(z)\\
&=&\frac{q^{m}-1}{q^{2}}+\frac{1}{q^{2}}\sum_{x\in \Bbb F_{q^{m}}^{*}}\sum_{y\in \Bbb F_{q}^{*}}\chi_{1}(ybx^{\frac{q^{m}-1}{q^{m_1}-1}})+\frac{1}{q^{2}}\sum_{x\in \Bbb F_{q^{m}}^{*}}\sum_{z\in \Bbb F_{q}^{*}}\chi_2(zx^{\frac{q^{m}-1}{q^{m_2}-1}})\chi(z)\\
& &+\frac{1}{q^{2}}\sum_{x\in \Bbb F_{q^{m}}^{*}}\sum_{y,z\in \Bbb F_{q}^{*}}\chi_{1}(ybx^{\frac{q^{m}-1}{q^{m_1}-1}})\chi_2(zx^{\frac{q^{m}-1}{q^{m_2}-1}})\chi(z)\\
&=&\frac{q^{m}-1}{q^{2}}+\frac{1}{q^{2}}\sum_{x\in \Bbb F_{q^{m}}^{*}}\sum_{y\in \Bbb F_{q}^{*}}\chi_{1}(ybx^{\frac{q^{m}-1}{q^{m_1}-1}})+\frac{1}{q^{2}}\sum_{x\in \Bbb F_{q^{m}}^{*}}\sum_{z\in \Bbb F_{q}^{*}}\chi_2(zx^{\frac{q^{m}-1}{q^{m_2}-1}})\chi(z)+\frac{\Omega(b)}{q^{2}}.
\end{eqnarray*}
Note that \begin{eqnarray*}\sum_{x\in \Bbb F_{q^{m}}^{*}}\sum_{z\in \Bbb F_{q}^{*}}\chi_2(zx^{\frac{q^{m}-1}{q^{m_2}-1}})\chi(z)&=&\frac{q^{m}-1}{q^{m_2}-1}\sum_{x\in \Bbb F_{q^{m}}^{*}}\sum_{z\in \Bbb F_{q}^{*}}\chi_2(zx)\chi(z)\\
&=&\frac{q^{m}-1}{q^{m_2}-1}\sum_{z\in \Bbb F_{q}^{*}}\chi(z)\sum_{x\in \Bbb F_{q^{m}}^{*}}\chi_2(zx)=\frac{q^{m}-1}{q^{m_2}-1}.
\end{eqnarray*}
From Section 4.1 above, we have
\begin{eqnarray*}\sum_{x\in \Bbb F_{q^{m}}^{*}}\sum_{y\in \Bbb F_{q}^{*}}\chi_{1}(ybx^{\frac{q^{m}-1}{q^{m_1}-1}})=\frac{(q-1)(1-q^{m})}{q^{m_1}-1}.
\end{eqnarray*}
From the discussions above, we obtain that
\begin{eqnarray}
N(b)=\frac{q^{m}-1}{q^{2}}(1+\frac{1}{q^{m_2}-1}-\frac{q-1}{q^{m_1}-1})+\frac{1}{q^{2}}\Omega(b).
\end{eqnarray}

For any $b\in \Bbb F_{q^{m_1}}^{*}$, the weight of a codeword
$$\textbf{c}(b)=(\Tr_{q^{m_{1}}/q}(b\mathrm{N}_{q^{m}/q^{m_{1}}}(d_{1})),\cdots,\Tr_{q^{m_{1}}/q}(b\mathrm{N}_{q^{m}/q^{m_{1}}}(d_{n})))$$
equals
\begin{eqnarray}w_{H}(\textbf{c}(b))=n-N(b)=\frac{(q-1)(q^{m}-1)(q^{m_1+m_2}-1)}{q^{2}(q^{m_1}-1)(q^{m_2}-1)}-\frac{1}{q^{2}}\Omega(b)
\end{eqnarray}
by Equations (4.4) and (4.5). Hence, by Lemma 3.2, the parameters of $\mathcal{C}_{D}$ for $e=l=1$  is
$$[\frac{q^{m_2-1}(q^{m}-1)}{q^{m_2}-1},m_1,\frac{(q-1)(q^{m}-1)q^{m_1+m_2-2}}{(q^{m_1}-1)(q^{m_2}-1)}].$$ Then $\mathcal{C}_{D}$ is an optimal one-weight linear code with respect to the Griesmer bound and is not new as mentioned above. For $e=2$ and $l=1$, the weight distribution of $\mathcal{C}_{D}$ is given in the following.
\begin{thm}
Let $m,m_1,m_2$ be positive integers such that $m_1|m$, $m_2|m$, and $\gcd(\frac{m_1}{e},q-1)=l=1$, where $\gcd(m_1,m_2)=e$. Let $\mathcal{C}_{D}$ be the linear code defined in Equation (1.1) for $a=1$. If $e=2$, then $\mathcal{C}_{D}$ is a two-weight linear code with parameters $[\frac{q^{m_2-1}(q^{m}-1)}{q^{m_2}-1},m_1]$ and its weight enumerator is given by Table II.
\end{thm}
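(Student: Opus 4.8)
The plan is to read off the two nonzero weights directly from the general weight formula in Equation~(4.6) together with the value distribution of $\Omega(b)$ supplied by Lemma 3.2. Since here $a=1$, $e=\gcd(m_1,m_2)=2$ and $l=\gcd(\frac{m_1}{e},q-1)=1$, part~(2) of Lemma 3.2 applies and already records both possible values of $\Omega(b)$ together with their exact multiplicities $\frac{q^{m_1}-1}{q+1}$ and $\frac{q(q^{m_1}-1)}{q+1}$ (these are integers because $e=2$ forces $q+1\mid q^{m_1}-1$). So the substance of the argument is a single substitution: insert each value of $\Omega(b)$ into
$$w_{H}(\mathbf{c}(b))=\frac{(q-1)(q^{m}-1)(q^{m_1+m_2}-1)}{q^{2}(q^{m_1}-1)(q^{m_2}-1)}-\frac{1}{q^{2}}\Omega(b)$$
and collect everything over the common denominator $q^{2}(q^{m_1}-1)(q^{m_2}-1)$.

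Carrying out the arithmetic, the constant $1$ appearing inside each value of $\Omega(b)$ cancels the $-1$ of the factor $q^{m_1+m_2}-1$. The frequency-$\frac{q^{m_1}-1}{q+1}$ class then produces the weight
$$\frac{(q-1)(q^{m}-1)}{(q^{m_1}-1)(q^{m_2}-1)}\left(q^{m_1+m_2-2}+(-1)^{\frac{m_1+m_2}{2}}q^{\frac{m_1+m_2}{2}-1}\right),$$
while the frequency-$\frac{q(q^{m_1}-1)}{q+1}$ class produces
$$\frac{(q-1)(q^{m}-1)}{(q^{m_1}-1)(q^{m_2}-1)}\left(q^{m_1+m_2-2}+(-1)^{\frac{m_1+m_2}{2}+1}q^{\frac{m_1+m_2}{2}-2}\right).$$
Up to cosmetic rearrangement (e.g.\ pulling out a power of $q$), these two closed forms are exactly the entries to be displayed in Table~II, so after matching notation no further manipulation is needed.

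To conclude, I would verify that both weights are strictly positive, which pins down the dimension. Because $e=2$ forces $m_1,m_2\geq 2$, the leading term $q^{m_1+m_2-2}$ strictly dominates its correction term (indeed $m_1+m_2-2>\frac{m_1+m_2}{2}-1$ whenever $m_1+m_2>2$, and the correction in the second weight is smaller still), so each weight is nonzero and, since the two correction terms carry different powers of $q$, the two weights are distinct. The map $b\mapsto \mathbf{c}(b)$ from $\Bbb F_{q^{m_1}}$ to $\mathcal{C}_{D}$ is $\Bbb F_{q}$-linear, and positivity of $w_{H}(\mathbf{c}(b))$ for every $b\in \Bbb F_{q^{m_1}}^{*}$ forces this map to be injective; hence $\dim_{\Bbb F_{q}}\mathcal{C}_{D}=m_1$. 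Combined with the length $n=\frac{q^{m_2-1}(q^{m}-1)}{q^{m_2}-1}$ from Equation~(4.4) and the two multiplicities inherited verbatim from Lemma 3.2, this yields the stated parameters and the weight enumerator of Table~II.

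The computation itself is routine; the only genuinely non-mechanical step is the positivity argument that secures the dimension, and it is also what explains why, unlike the $a=0$ case of Theorem 4.2, no pair $(m_1,m_2)$ (in particular not $(2,2)$) need be excluded here: the dominant term $q^{m_1+m_2-2}$ keeps both weights bounded away from zero for all admissible $m_1,m_2$. I would therefore foreground that positivity check as the main point to get right.
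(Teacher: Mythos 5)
Your proposal is correct and follows exactly the paper's own route: the paper's proof likewise consists of substituting the value distribution of $\Omega(b)$ from Lemma 3.2 into the weight formula of Equation (4.6) and then noting that $w_{H}(\mathbf{c}(b))>0$ for all $b\in \Bbb F_{q^{m_1}}^{*}$ to pin down the dimension as $m_1$. Your explicit arithmetic, the integrality remark on the multiplicities, and the observation that (unlike the $a=0$ case of Theorem 4.2) no pair $(m_1,m_2)$ needs to be excluded are all accurate elaborations of what the paper compresses into two lines.
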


\[ \small\begin{tabular} {c} Table II. Weight distribution of the code in Theorem 4.5.  \\
\begin{tabular}{c|c}
   \hline
 Weight & Frequency \\
  \hline
        0   &    1\\
  $\frac{(q-1)(q^{m}-1)(q^{m_1+m_2}+(-1)^{\frac{m_1+m_2}{2}}q^{\frac{m_1+m_2}{2}+1})}{q^{2}(q^{m_1}-1)(q^{m_2}-1)}$  & $\frac{q^{m_1}-1}{q+1}$ \\
  $\frac{(q-1)(q^{m}-1)(q^{m_1+m_2}+(-1)^{\frac{m_1+m_2}{2}+1}q^{\frac{m_1+m_2}{2}})}{q^{2}(q^{m_1}-1)(q^{m_2}-1)}$  & $\frac{q(q^{m_1}-1)}{q+1}$ \\
   \hline
\end{tabular}
\end{tabular}
\]

\begin{proof}
For $e=2$, the weight distributions of $\mathcal{C}_{D}$ can be obtained by Lemma 3.2 and Equation (4.6). Note that $w_{H}(\textbf{c}_{b})>0$ for all $b\in \Bbb F_{q^{m_1}}^{*}$. Then the dimension equals $m_1$.
\end{proof}

\begin{exa}
Let $m_1=2,m_2=4,m=4$. If $q=2$, then $\mathcal{C}_{D}$ in Theorem 4.5 is an almost optimal $[8,2,4]$ linear code according to the Griesmer bound and has weight enumerator $1+z^{4}+2z^{6}$. If $q=3$, then $\mathcal{C}_{D}$ in Theorem 4.5 is an nearly optimal $[27,2,18]$ linear code, while the corresponding optimal linear codes have parameters $[27,2,20]$.
\end{exa}

\begin{exa}
Let $m_1=4,m_2=6,m=12$ and $q=2$. Then $\mathcal{C}_{D}$ in Theorem 4.5 is a  $[2080,4,1040]$ two-weight linear code. Its weight enumerator is given by
$1+5z^{1040}+10z^{1144}$.
\end{exa}

\begin{thm}
Let $m,m_1,m_2$ be positive integers such that $m_1|m$, $m_2|m$, and $\gcd(\frac{m_1}{e},q-1)=l=2$, where $\gcd(m_1,m_2)=e$. Let $\mathcal{C}_{D}$ be the linear code defined in Equation (1.1) for $a=1$. If $e=1$, then $\mathcal{C}_{D}$ is a two-weight linear code with parameters
$$[\frac{q^{m_2-1}(q^{m}-1)}{q^{m_2}-1},m_1,\frac{(q-1)(q^{m}-1)(q^{m_1+m_2}-q^{\frac{m_1+m_2+1}{2}})}{q^{2}(q^{m_1}-1)(q^{m_2}-1)}]$$ and its weight distribution is given in Table III.
\end{thm}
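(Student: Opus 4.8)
The plan is to read off the weight distribution directly from the value distribution of $\Omega(b)$ provided by Lemma 3.4, fed into the general weight formula (4.6). The hypotheses $e=1$ and $l=\gcd(m_1,q-1)=2$ force $q$ odd, $m_1$ even, and---since $\gcd(m_1,m_2)=1$---$m_2$ odd; hence $m_1+m_2$ is odd and $\frac{m_1+m_2+1}{2}$ is an integer, so Lemma 3.4 applies verbatim.

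First I would record the length from Equation (4.4), namely $n=\frac{q^{m_2-1}(q^m-1)}{q^{m_2}-1}$, the first entry of the asserted $[n,k,d]$ triple. For each $b\in\Bbb F_{q^{m_1}}^{*}$, Equation (4.6) gives
$$w_H(\textbf{c}(b))=\frac{(q-1)(q^m-1)(q^{m_1+m_2}-1)}{q^2(q^{m_1}-1)(q^{m_2}-1)}-\frac{1}{q^2}\Omega(b).$$
Substituting the two values of $\Omega(b)$ listed in Lemma 3.4 and simplifying---the $\pm1$ inside each bracket cancels the $-1$ in $q^{m_1+m_2}-1$---yields precisely two weights
$$w_1=\frac{(q-1)(q^m-1)(q^{m_1+m_2}-q^{\frac{m_1+m_2+1}{2}})}{q^2(q^{m_1}-1)(q^{m_2}-1)},\quad w_2=\frac{(q-1)(q^m-1)(q^{m_1+m_2}+q^{\frac{m_1+m_2+1}{2}})}{q^2(q^{m_1}-1)(q^{m_2}-1)},$$
each occurring $\frac{q^{m_1}-1}{2}$ times, mirroring the frequencies in Lemma 3.4; here $w_1$ is the claimed minimum distance $d$.

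Two short checks then close the argument. Positivity of both weights reduces to $q^{m_1+m_2}>q^{\frac{m_1+m_2+1}{2}}$, i.e. $m_1+m_2>1$, which always holds; consequently $\textbf{c}(b)\neq 0$ for every nonzero $b$, the $\Bbb F_q$-linear map $b\mapsto\textbf{c}(b)$ is injective, and the dimension is exactly $k=m_1$. The multiplicities tally as $1+\frac{q^{m_1}-1}{2}+\frac{q^{m_1}-1}{2}=q^{m_1}$, the full codeword count, and since $w_1\neq w_2$ the code is genuinely two-weight. Collecting $n$, $k$, $w_1$, $w_2$ and their frequencies produces Table III.

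I expect no substantive obstacle: the entire analytic content sits in Lemma 3.4, whose proof already resolved the relevant quadratic Gauss sums through Lemma 2.2. The only matters needing attention are the algebraic bookkeeping in simplifying $w_H(\textbf{c}(b))$ and the integrality of $\frac{m_1+m_2+1}{2}$ under the stated parity constraints, both of which are routine.
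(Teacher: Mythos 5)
Your proposal is correct and follows exactly the paper's own route: the paper's proof of this theorem is the one-line observation that it follows from the value distribution of $\Omega(b)$ in the case $l=2$, $e=1$ together with the weight formula in Equation (4.6), which is precisely the computation you carry out (with the dimension, positivity, and multiplicity checks the paper leaves implicit). One minor citation slip: the value distribution of $\Omega(b)$ you invoke is Lemma 3.3, not Lemma 3.4 (the latter is the general formula for $\Delta(b)$); the content you quote and use is that of Lemma 3.3, so nothing substantive is affected.
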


\[ \small\begin{tabular} {c} Table III. Weight distribution of the code in Theorem 4.8.  \\
\begin{tabular}{c|c}
   \hline
 Weight & Frequency \\
  \hline
        0   &    1\\
  $\frac{(q-1)(q^{m}-1)(q^{m_1+m_2}-q^{\frac{m_1+m_2+1}{2}})}{q^{2}(q^{m_1}-1)(q^{m_2}-1)}$  & $\frac{q^{m_1}-1}{2}$ \\
  $\frac{(q-1)(q^{m}-1)(q^{m_1+m_2}+q^{\frac{m_1+m_2+1}{2}})}{q^{2}(q^{m_1}-1)(q^{m_2}-1)}$  & $\frac{q^{m_1}-1}{2}$ \\
   \hline
\end{tabular}
\end{tabular}
\]
\begin{proof}
The proof is completed by Lemma  3.3 and Equation (4.6).
\end{proof}

\begin{exa}
Let $m_1=2,m_2=3,m=6$ and $q=3$. Then $\mathcal{C}_{D}$ in Theorem 4.8 is a  $[252,2,168]$ two-weight linear code. Its weight enumerator is given by
$1+4z^{168}+4z^{210}$. Its dual is a near-MDS code with parameters $[252,250,2]$.
\end{exa}

\subsection{Shortened linear codes of $\mathcal{C}_{D}$}
It is observed that the weights of the code in Theorems 4.2 have a common divisor $q-1$. This indicates that the code $\mathcal{C}_{D}$ may be punctured into a shorter one.

Assume that $a=0$. Note that $x\in D$ implies that $ux\in D$ for any $u\in \Bbb F_{q}$. Hence, the defining set of $\mathcal{C}_{D}$ in Equation (1.1) can be expressed as
\begin{eqnarray}D=(\Bbb F_{q}^{*})D_1=\{uv:u\in \Bbb F_{q}^{*}\mbox{ and }v\in D_1\},\end{eqnarray}
where $d_i/d_j\not\in \Bbb F_{q}^{*}$ for every pair of distinct elements $d_i,d_j$ in $D_1$. Then we obtain a shortened linear code $\mathcal{C}_{D_1}$ of $\mathcal{C}_{D}$. By Theorem 4.2, we directly obtain the following result.
\begin{cor}
Let $m,m_1,m_2$ be positive integers such that $m_1|m$, $m_2|m$, and $m_2>1$. Denote $\gcd(m_1,m_2)=e$. Let $\mathcal{C}_{D_1}$ be the linear code and its defining set is given in Equation (4.7) for $a=0$. If $e=2$ and $(m_1,m_2)\neq (2,2)$, then $\mathcal{C}_{D_1}$ is a two-weight linear code with parameters $[\frac{(q^{m}-1)(q^{m_2-1}-1)}{(q^{m_2}-1)(q-1)},m_1]$ and its weight enumerator is given by Table IV.
\end{cor}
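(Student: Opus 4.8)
The plan is to read off the weight distribution of $\mathcal{C}_{D_1}$ directly from Theorem 4.2 by understanding how a codeword behaves on each $\Bbb F_{q}^{*}$-orbit sitting inside $D$. First I would record the combinatorial setup coming from Equation (4.7): by the defining property of $D_1$, the orbits $\Bbb F_{q}^{*}v=\{uv:u\in\Bbb F_{q}^{*}\}$ for $v\in D_1$ are pairwise disjoint, each has exactly $q-1$ elements (since $uv=u'v$ with $v\neq 0$ forces $u=u'$), and their union is all of $D$. Hence $|D|=(q-1)|D_1|$, and using the value of $n=|D|$ from Theorem 4.2 this already yields the length $|D_1|=\frac{(q^{m}-1)(q^{m_2-1}-1)}{(q^{m_2}-1)(q-1)}$ claimed for $\mathcal{C}_{D_1}$.

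The key step is a scaling identity on coordinates. For $u\in\Bbb F_{q}^{*}$ and $x\in\Bbb F_{q^{m}}^{*}$, multiplicativity of the norm gives $\mathrm{N}_{q^{m}/q^{m_{1}}}(ux)=u^{\rho}\,\mathrm{N}_{q^{m}/q^{m_{1}}}(x)$ with $\rho=\frac{q^{m}-1}{q^{m_{1}}-1}$, and $u^{\rho}\in\Bbb F_{q}^{*}$ because $\Bbb F_{q}^{*}$ is closed under taking powers. Since $u^{\rho}\in\Bbb F_{q}$ and $\Tr_{q^{m_{1}}/q}$ is $\Bbb F_{q}$-linear, this produces
$$\Tr_{q^{m_{1}}/q}\!\left(b\,\mathrm{N}_{q^{m}/q^{m_{1}}}(ux)\right)=u^{\rho}\,\Tr_{q^{m_{1}}/q}\!\left(b\,\mathrm{N}_{q^{m}/q^{m_{1}}}(x)\right).$$
As $u^{\rho}\neq 0$, the coordinate indexed by $ux$ is nonzero if and only if the coordinate indexed by $x$ is. Thus on each orbit $\Bbb F_{q}^{*}v$ the codeword $\mathbf{c}(b)$ is either identically zero or nowhere zero, whence the weights split multiplicatively as $w_{H}(\mathbf{c}_{D}(b))=(q-1)\,w_{H}(\mathbf{c}_{D_1}(b))$ for every $b\in\Bbb F_{q^{m_1}}^{*}$, where $\mathbf{c}_{D}(b)$ and $\mathbf{c}_{D_1}(b)$ denote the codewords of $\mathcal{C}_{D}$ and $\mathcal{C}_{D_1}$ attached to $b$.

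From here the conclusion is immediate. Each nonzero weight of $\mathcal{C}_{D}$ in Table I is divisible by $q-1$ (as already observed before the corollary), and dividing by $q-1$ gives the corresponding weight of $\mathcal{C}_{D_1}$, while the frequencies $\frac{q^{m_1}-1}{q+1}$ and $\frac{q(q^{m_1}-1)}{q+1}$ are unchanged since the codewords of both codes are indexed by the same set of nonzero $b$. I would also confirm the dimension stays $m_1$: for $(m_1,m_2)\neq(2,2)$ Theorem 4.2 gives $w_{H}(\mathbf{c}_{D}(b))>0$ for every $b\neq 0$, hence $w_{H}(\mathbf{c}_{D_1}(b))>0$, so $b\mapsto\mathbf{c}_{D_1}(b)$ is injective and the dimension is exactly $m_1$. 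Assembling these facts yields Table IV. The only genuinely content-bearing step is the scaling identity together with the verification that $u^{\rho}\in\Bbb F_{q}^{*}$, which is what makes the orbitwise all-or-nothing behaviour valid; the remainder is bookkeeping inherited from Theorem 4.2.
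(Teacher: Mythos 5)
Your proof is correct and follows essentially the same route as the paper, which simply asserts the corollary as a direct consequence of Theorem 4.2 via the coset decomposition $D=(\Bbb F_{q}^{*})D_{1}$ of Equation (4.7). The scaling identity you verify, namely that $u^{\rho}\in\Bbb F_{q}^{*}$ for $\rho=\frac{q^{m}-1}{q^{m_{1}}-1}$ forces each codeword to be all-or-nothing on every $\Bbb F_{q}^{*}$-orbit, so that weights divide by $q-1$ while lengths scale and frequencies are preserved, is exactly the justification the paper leaves implicit.
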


\[ \small\begin{tabular} {c} Table IV. Weight distribution of the code in Corollary 4.10.  \\
\begin{tabular}{c|c}
   \hline
 Weight & Frequency \\
  \hline
        0   &    1\\
  $\frac{q^{m_1-1}(q^{m}-1)(q^{m_2-1}-1-(-1)^{\frac{m_1+m_2}{2}}(q-1)q^{\frac{m_2-m_1}{2}})}{(q^{m_1}-1)(q^{m_2}-1)}$  & $\frac{q^{m_1}-1}{q+1}$ \\
  $\frac{q^{m_1-1}(q^{m}-1)(q^{m_2-1}-1-(-1)^{\frac{m_1+m_2}{2}+1}(q-1)q^{\frac{m_2-m_1}{2}-1})}{(q^{m_1}-1)(q^{m_2}-1)}$  & $\frac{q(q^{m_1}-1)}{q+1}$ \\
   \hline
\end{tabular}
\end{tabular}
\]

\begin{exa}
 Let $m_1=2,m_2=4,m=4$. If $q=3$, then $\mathcal{C}_{D_{1}}$ in Corollary 4.10 is an optimal  $[13,2,9]$ two-weight linear code according to the Griesmer bound and has weight enumerator $1+6z^{9}+2z^{12}$. Its dual has parameters $[13,11,2]$ which is optimal according to \cite{G}.
\end{exa}

\section{Applications}
In this section, we apply our linear codes to construct secret sharing schemes and strongly regular graphs. We denote by $\mathcal{C}^{\bot}$ the dual code of a code $\mathcal{C}$.
\subsection{Secret sharing schemes from linear codes}
Secret sharing schemes were introduced by Shamir and Blakley for the first time in 1979 \cite{B, S}.  Secret sharing schemes are used in banking
systems, cryptographic protocols, electronic voting systems, and the control of nuclear weapons.

It was shown in \cite{A, YD} that any linear code over $\Bbb F_{q}$ can be employed to construct secret sharing schemes. In order to describe the secret sharing scheme of a linear code (see \cite{AB, DD}), we need to introduce the covering problem of linear codes. The support of a vector $\textbf{c}=\{c_0,c_1,\ldots,c_{n-1}\}\in \Bbb F_{q}^{n}$ is defined as $\{0\leq i \leq n-1:c_i\neq 0\}$. A codeword $\textbf{c}_{1}$ covers a codeword $\textbf{c}_{2}$ if the support of $\textbf{c}_{1}$ contains that of $\textbf{c}_{2}$. A minimal codeword of a linear code $\mathcal{C}$ is a nonzero
codeword that does not cover any other nonzero codeword of $\mathcal{C}$. The covering problem of a linear code is to determine all the minimal codewords of $\mathcal{C}$. From \cite[Theorem 12]{DD}, we know that secret sharing scheme with interesting access structure can be derived from $\mathcal{C}^{\bot}$ provided that each nonzero codeword of a linear code $\mathcal{C}$ is minimal.

If the weights of a linear code $\mathcal{C}$ are close enough to each other, then all nonzero codewords of $\mathcal{C}$ are minimal, as described as follows.
\begin{lem}\cite{AB}
Let $w_{\min}$ and $w_{\max}$ denote the minimum and maximum nonzero Hamming weights of a $q$-ary linear code $\mathcal{C}$, respectively. If $w_{\min}/w_{\max}>\frac{q-1}{q}$, then every nonzero codeword of $\mathcal{C}$ is minimal.
\end{lem}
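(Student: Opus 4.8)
The plan is to argue by contraposition: I will assume that some nonzero codeword $\mathbf{c}$ of $\mathcal{C}$ fails to be minimal and derive the inequality $w_{\min}/w_{\max} \le \frac{q-1}{q}$, contradicting the hypothesis. Non-minimality of $\mathbf{c}$ means there is a nonzero codeword $\mathbf{c}'$ whose support is \emph{strictly} contained in $\text{supp}(\mathbf{c})$; in particular $\mathbf{c}'$ is not a scalar multiple of $\mathbf{c}$, and writing $w=w_{H}(\mathbf{c})$ and $w'=w_{H}(\mathbf{c}')$ we have $w_{\min}\le w' < w \le w_{\max}$.

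The main device is to examine the entire line of codewords $\mathbf{c}-a\mathbf{c}'$ as $a$ ranges over $\Bbb F_q$, and to track exactly which coordinates vanish. First I would classify the positions: outside $\text{supp}(\mathbf{c})$ every such codeword is zero; on $\text{supp}(\mathbf{c})\setminus\text{supp}(\mathbf{c}')$ the entry equals the nonzero entry of $\mathbf{c}$ and so never vanishes; and on a position $i\in\text{supp}(\mathbf{c}')$ both $\mathbf{c}$ and $\mathbf{c}'$ are nonzero, so $c_i-a c'_i$ vanishes for exactly one value $a_i=c_i(c'_i)^{-1}$, which is necessarily \emph{nonzero}. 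This yields the exact weight identity $w_{H}(\mathbf{c}-a\mathbf{c}') = w - \#\{i\in\text{supp}(\mathbf{c}') : a_i = a\}$.

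Next I would sum this identity over the $q-1$ values $a\in\Bbb F_q^*$. Since every $a_i$ is nonzero, the counts $\#\{i : a_i=a\}$ over $a\ne 0$ exhaust all $w'$ positions of $\text{supp}(\mathbf{c}')$, so the total of these weights is exactly $(q-1)w - w'$. Each codeword $\mathbf{c}-a\mathbf{c}'$ with $a\ne 0$ is nonzero (otherwise $\mathbf{c}=a\mathbf{c}'$ would force equal supports), hence has weight at least $w_{\min}$; therefore $(q-1)w_{\min}\le (q-1)w - w'$. Feeding in $w\le w_{\max}$ and $w'\ge w_{\min}$ gives $q\,w_{\min}\le (q-1)w_{\max}$, the contradiction sought.

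The step I expect to be the crux — and the spot where a careless attempt breaks down — is the restriction of the averaging to $a\in\Bbb F_q^*$ rather than all of $\Bbb F_q$. Summing over the full field includes $\mathbf{c}$ itself (the term $a=0$) and only produces the weaker ratio $\frac{q}{q+1}$. The sharp constant $\frac{q-1}{q}$ hinges on the observation that every collapsing value $a_i$ is nonzero, so discarding $a=0$ removes one codeword from the count without losing any of the $w'$ cancellations; carrying out precisely this bookkeeping is the main obstacle.
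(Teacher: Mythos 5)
Your proof is correct, but there is no proof in the paper to compare it to: the lemma is quoted from Ashikhmin and Barg \cite{AB} and used as a black box. What you wrote is in fact the standard argument from that source: classify the coordinates of $\mathbf{c}-a\mathbf{c}'$, note that each cancellation value $a_i=c_i(c_i')^{-1}$ lies in $\mathbb{F}_q^{*}$, sum the identity $w_H(\mathbf{c}-a\mathbf{c}')=w-\#\{i\in\mathrm{supp}(\mathbf{c}'):a_i=a\}$ over the $q-1$ nonzero scalars to get $(q-1)w_{\min}\le(q-1)w-w'$, and conclude $q\,w_{\min}\le(q-1)w_{\max}$; your remark that restricting the average to $a\ne 0$ is what produces the sharp constant $\frac{q-1}{q}$ rather than $\frac{q}{q+1}$ is exactly the right crux. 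One definitional caveat: you equate non-minimality of $\mathbf{c}$ with the existence of a nonzero codeword whose support is \emph{strictly} contained in that of $\mathbf{c}$, whereas the intended notion (the paper's wording is loose here, since every codeword trivially covers its own scalar multiples) is that $\mathbf{c}$ covers some nonzero codeword that is not a scalar multiple of it; these can differ, because two non-proportional codewords may have identical supports. The gap is harmless: your computation never uses strictness, only containment of supports together with the fact that $\mathbf{c}'$ is not proportional to $\mathbf{c}$ (which is what guarantees $\mathbf{c}-a\mathbf{c}'\ne\mathbf{0}$ for all $a\in\mathbb{F}_q^{*}$), so it applies verbatim to the equal-support case as well.
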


For the codes in Theorem 4.2 and Corollary 4.10, we have
$$\frac{w_{\min}}{w_{\max}}=\frac{q^{m_2-1}-1-(q-1)q^{\frac{m_2-m_1}{2}}}
{q^{m_2-1}-1+(q-1)q^{\frac{m_2-m_1}{2}-1}}>\frac{q-1}{q}$$ if $m_1+m_2\equiv 0\pmod{4}$, and
$$\frac{w_{\min}}{w_{\max}}=\frac{q^{m_2-1}-1-(q-1)q^{\frac{m_2-m_1}{2}-1}}{q^{m_2-1}-1+(q-1)q^{\frac{m_2-m_1}{2}}}>\frac{q-1}{q}
$$ if $m_1+m_2\equiv 2\pmod{4}$.

For the code in Theorem 4.5, we have
$$\frac{w_{\min}}{w_{\max}}=\frac{q^{m_1+m_2}-q^{\frac{m_1+m_2}{2}}}{q^{m_1+m_2}+q^{\frac{m_1+m_2}{2}+1}}
>\frac{q-1}{q}$$ if $m_1+m_2\equiv 0\pmod{4}$, and
$$\frac{w_{\min}}{w_{\max}}=\frac{q^{m_1+m_2}-q^{\frac{m_1+m_2}{2}+1}}
{q^{m_1+m_2}+q^{\frac{m_1+m_2}{2}}}>\frac{q-1}{q}
$$ if $m_1+m_2\equiv 2\pmod{4}$.

For the code in Theorem 4.8, we have
$$\frac{w_{\min}}{w_{\max}}=\frac{q^{m_1+m_2}-q^{\frac{m_1+m_2+1}{2}}}{q^{m_1+m_2}+q^{\frac{m_1+m_2+1}{2}}}=1-\frac{2}{1+q^{\frac{m_1+m_2-1}{2}}}>\frac{q-1}{q}$$
if $m_1+m_2>3$.

From the discussions above, the linear codes obtained in this paper can be used to construct secret sharing schemes with interesting access structures using the framework in \cite{YD}.
\subsection{Strongly regular graphs from linear codes}
A connected graph with $N$ vertices is called a strongly regular graph with parameters $(N,K,\lambda,\mu)$ if it is regular of valency $K$ and the number of vertices joined to two given vertices is $\lambda$ or $\mu$ according as the two given vertices are adjacent or non-adjacent. The theory of strongly regular graphs was introduced by Bose in 1963 for the first time \cite{BO}.

A code $\mathcal{C}$ is said to be projective if the minimum distance of its dual code $\mathcal{C}^{\bot}$ is at least 3.

The following lemma gives a connection between projective two-weight linear codes and strongly regular graphs.
\begin{lem}\cite{C}
If $\mathcal{C}$ is a projective two-weight $[n,k]$ linear code over $\Bbb F_{q}$ with two nonzero weights $w_1,w_2$, then it is equivalent to a strongly regular graph with the following parameters:
\begin{eqnarray*}
N&=&q^{k},\\
K&=&n(q-1),\\
\lambda &=&K^{2}+3K-q(w_1+w_2)-Kq(w_1+w_2)+q^{2}w_1w_2,\\
\mu &=&\frac{q^{2}w_1w_2}{q^{k}}.
\end{eqnarray*}
\end{lem}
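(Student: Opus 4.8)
This is the classical Delsarte correspondence, so the plan is to realize $\mathcal{C}$ geometrically and attach to it an explicit Cayley graph whose spectrum is governed by the weight distribution. First I would fix a generator matrix $G=(g_1,\dots,g_n)$ of $\mathcal{C}$, so that every codeword is $aG=(\langle a,g_1\rangle,\dots,\langle a,g_n\rangle)$ for a unique $a\in\Bbb F_q^{k}$, where $\langle\cdot,\cdot\rangle$ denotes the standard bilinear form. The hypothesis that $\mathcal{C}$ is projective, i.e. $d(\mathcal{C}^{\bot})\geq 3$, is precisely the statement that no column $g_i$ is zero and that no two columns are $\Bbb F_q$-proportional; hence the $g_i$ represent $n$ distinct projective points and, crucially, any two columns are linearly independent. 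The plan is then to take the connection set $\Omega=\{ug_i:u\in\Bbb F_q^{*},\,1\leq i\leq n\}\subseteq\Bbb F_q^{k}$, which has exactly $n(q-1)=K$ elements (distinctness uses projectivity), satisfies $\Omega=-\Omega$ and $0\notin\Omega$, and spans $\Bbb F_q^{k}$. The graph $\Gamma$ is the Cayley graph on $(\Bbb F_q^{k},+)$ with connection set $\Omega$: it has $N=q^{k}$ vertices, is $K$-regular, and is connected because $\Omega$ generates the whole group. This already delivers the stated $N$ and $K$.

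The heart of the argument is the spectrum. The eigenvalues of $\Gamma$ are the character sums $\theta_a=\sum_{v\in\Omega}\chi_a(v)$ indexed by $a\in\Bbb F_q^{k}$, where $\chi_a(v)=\chi(\langle a,v\rangle)$ runs over the additive characters of $\Bbb F_q^{k}$. Writing $\Omega$ as the scalar multiples of the columns and summing over $u\in\Bbb F_q^{*}$ by the orthogonality of additive characters recalled in Section 2, each column $g_i$ contributes $q-1$ when $\langle a,g_i\rangle=0$ and $-1$ otherwise; hence $\theta_a=K-q\,w_H(aG)$ for every $a$. Consequently $\theta_0=K$, while for $a\neq 0$ the weight $w_H(aG)\in\{w_1,w_2\}$ forces $\theta_a$ to take only the two values $r=K-qw_1$ and $s=K-qw_2$. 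A connected regular graph with exactly three distinct eigenvalues is strongly regular, so $\Gamma$ is an SRG, and the remaining parameters are recovered from $r+s=\lambda-\mu$ and $rs=\mu-K$. Substituting $r,s$ gives at once $\lambda=K^{2}+3K-q(w_1+w_2)-Kq(w_1+w_2)+q^{2}w_1w_2$, matching the statement.

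It remains to put $\mu=K+rs$ into the closed form $q^{2}w_1w_2/q^{k}$, and this is the step where projectivity really does the work. The plan is to exploit the annihilator identity $\sum_{a\in\Bbb F_q^{k}}(w_H(aG)-w_1)(w_H(aG)-w_2)=w_1w_2$, valid because every nonzero codeword has weight $w_1$ or $w_2$ while the zero word alone contributes $w_1w_2$. Expanding the left side needs only the zeroth, first, and second power moments of the weights, which I would compute directly by double counting: $\sum_a 1=q^{k}$, $\sum_a w_H(aG)=nq^{k-1}(q-1)$, and $\sum_a w_H(aG)^{2}=nq^{k-1}(q-1)+n(n-1)q^{k-2}(q-1)^{2}$. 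The second moment is exactly where the pairwise linear independence of the columns is indispensable, through the count $|\{a:\langle a,g_i\rangle\neq 0,\ \langle a,g_j\rangle\neq 0\}|=q^{k-2}(q-1)^{2}$ for $i\neq j$. Feeding these three moments into the annihilator identity produces a single algebraic relation among $n,q,k,w_1,w_2$, and a routine rearrangement then collapses $K+rs$ to $q^{2}w_1w_2/q^{k}$, completing the parameter list. The main obstacle is thus not conceptual but purely computational: evaluating the second power moment correctly (the one place the projective hypothesis cannot be dropped) and carrying out the elimination that reduces $K+rs$ to the compact expression for $\mu$.
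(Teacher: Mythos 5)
The paper gives no proof of this lemma at all --- it is quoted verbatim from Calderbank--Kantor \cite{C} --- so there is nothing internal to compare against; what you have written is a correct, self-contained proof of the cited result, following the classical Delsarte route that underlies \cite{C}. I checked the key steps: the character-sum computation $\theta_a=\sum_{v\in\Omega}\chi(\langle a,v\rangle)=(n-w_H(aG))(q-1)-w_H(aG)=K-qw_H(aG)$ is right, so the Cayley graph has exactly the three eigenvalues $K$, $r=K-qw_1$, $s=K-qw_2$ (distinct since $0<w_1\neq w_2$), and connectedness plus three eigenvalues does give strong regularity; the relations $r+s=\lambda-\mu$, $rs=\mu-K$ then reproduce the stated $\lambda$. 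Your closing elimination also checks out: with $K=n(q-1)$, the moment identity
\begin{equation*}
nq^{k-1}(q-1)+n(n-1)q^{k-2}(q-1)^{2}-(w_1+w_2)\,nq^{k-1}(q-1)+w_1w_2q^{k}=w_1w_2
\end{equation*}
collapses to $q^{k-2}\bigl(K+K^{2}-Kq(w_1+w_2)+q^{2}w_1w_2\bigr)=w_1w_2$, i.e.\ $\mu=K+rs=q^{2}w_1w_2/q^{k}$, and you correctly isolate the two places where projectivity is indispensable (distinctness of the connection set, and the count $q^{k-2}(q-1)^{2}$ in the second moment). The proof is complete as proposed.
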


Due to Lemma 5.2, new projective two-weight linear codes yield new strongly regular graphs. Examples in Section 4 show that our codes are not always projective. In particular, we find two classes of projective two-weight codes in the following.
\begin{lem}
Let $m=m_1,m_2=2$ and other notations be the same as those of Theorem 4.5. Then the linear code $\mathcal{C}_{D}$ in Theorem 4.5 is a projective two-weight $[\frac{q(q^{m}-1)}{q^{2}-1},m]$ linear code with weight enumerator
$$1+\frac{q^{m}-1}{q+1}z^{\frac{q^{m}-(-1)^{\frac{m}{2}}q^{\frac{m}{2}}}{q+1}}+\frac{q(q^{m}-1)}{q+1}z^{\frac{q^{m}+(-1)^{\frac{m}{2}}q^{\frac{m}{2}-1}}{q+1}}.$$
\end{lem}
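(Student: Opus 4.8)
The weight enumerator is obtained by a direct specialization of Theorem 4.5. Setting $m_1=m$ and $m_2=2$ forces $e=\gcd(m,2)=2$ (so $m$ is even) and $l=\gcd(\frac{m}{2},q-1)=1$, so the hypotheses of Theorem 4.5 in the case $e=2$ are met. Substituting $m_1=m,\ m_2=2$ into the length $\frac{q^{m_2-1}(q^m-1)}{q^{m_2}-1}$ gives $\frac{q(q^m-1)}{q^2-1}$, and a short simplification of the two weights of Table II, using $(-1)^{(m+2)/2}=-(-1)^{m/2}$ and $\frac{q-1}{q^2-1}=\frac{1}{q+1}$, yields precisely the exponents $\frac{q^m-(-1)^{m/2}q^{m/2}}{q+1}$ and $\frac{q^m+(-1)^{m/2}q^{m/2-1}}{q+1}$ together with the stated frequencies $\frac{q^m-1}{q+1}$ and $\frac{q(q^m-1)}{q+1}$. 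The dimension $m$ and the two-weight property are inherited from Theorem 4.5. Thus the only genuinely new claim is that $\mathcal{C}_D$ is \emph{projective}, i.e. $d(\mathcal{C}_D^{\bot})\ge 3$.

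To handle projectivity I would first put $\mathcal{C}_D$ in a transparent form. Since $m_1=m$, the norm $\mathrm{N}_{q^m/q^{m}}$ is the identity, so
$$\mathcal{C}_D=\{(\Tr_{q^m/q}(xd_1),\ldots,\Tr_{q^m/q}(xd_n)):x\in\Bbb F_{q^m}\},\qquad D=\{d_1,\ldots,d_n\}\subseteq\Bbb F_{q^m}^{*}.$$
Fixing an $\Bbb F_q$-basis of $\Bbb F_{q^m}$ identifies this with a code whose generator matrix has one column per $d_i$. By non-degeneracy of the trace form $(x,y)\mapsto\Tr_{q^m/q}(xy)$, for $c\in\Bbb F_q^{*}$ the column of $d_i$ equals $c$ times the column of $d_j$ if and only if $\Tr_{q^m/q}(x(d_i-cd_j))=0$ for all $x\in\Bbb F_{q^m}$, i.e. $d_i=cd_j$. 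Hence no column is zero (as $D\subseteq\Bbb F_{q^m}^{*}$), and two columns are $\Bbb F_q$-proportional if and only if $D$ contains distinct $d_i,d_j$ with $d_i/d_j\in\Bbb F_q^{*}$. Therefore $d(\mathcal{C}_D^{\bot})\ge 3$ is equivalent to: no two distinct elements of $D$ are $\Bbb F_q^{*}$-multiples of one another.

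To prove this, suppose $x\in D$ and $cx\in D$ for some $c\in\Bbb F_q^{*}$. Writing $N=\frac{q^m-1}{q^2-1}$ and using multiplicativity of the norm together with $\mathrm{N}_{q^m/q^2}(c)=c^{N}\in\Bbb F_q$ and the $\Bbb F_q$-linearity of $\Tr_{q^2/q}$, I obtain
$$\Tr_{q^2/q}(\mathrm{N}_{q^m/q^2}(cx))=c^{N}\,\Tr_{q^2/q}(\mathrm{N}_{q^m/q^2}(x))=-c^{N}.$$
Since $cx\in D$ the left side equals $-1$, forcing $c^{N}=1$. Now $N=1+q^2+\cdots+q^{m-2}\equiv\frac{m}{2}\pmod{q-1}$, so $\gcd(N,q-1)=\gcd(\frac{m}{2},q-1)=l=1$; as $\ord(c)\mid q-1$ and $\ord(c)\mid N$, we get $\ord(c)=1$, i.e. $c=1$. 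Hence distinct elements of $D$ are never $\Bbb F_q^{*}$-proportional, which yields projectivity.

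The substitution into Table II is the routine part. The step I expect to require the most care is the reduction of projectivity to non-proportionality of the defining-set elements via non-degeneracy of the trace form, combined with the arithmetic identity $\gcd\!\left(\frac{q^m-1}{q^2-1},\,q-1\right)=\gcd(\frac{m}{2},q-1)$; this is exactly where the hypothesis $l=1$ enters to force $c=1$, and hence where projectivity (and thus applicability of Lemma 5.2 to build strongly regular graphs) genuinely depends on the assumptions.
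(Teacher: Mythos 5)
Your proposal is correct, and its first half (specializing Theorem 4.5/Table II to $m_1=m$, $m_2=2$ to get the length, dimension, weights and frequencies) is exactly what the paper does. For the projectivity claim, however, you take a genuinely different route. The paper's proof is numerical: it substitutes the two weights and their frequencies into the first three Pless power moments, solves the resulting linear system for the numbers $B_1,B_2$ of weight-$1$ and weight-$2$ codewords in $\mathcal{C}_{D}^{\bot}$, and finds $B_1=B_2=0$. You argue structurally: since $m_1=m$ the norm is the identity and $\mathcal{C}_D$ is the trace code of the defining set $D$; nondegeneracy of the pairing $(x,y)\mapsto\Tr_{q^m/q}(xy)$ identifies weight-$1$ (resp.\ weight-$2$) dual codewords with zero (resp.\ $\mathbb{F}_q$-proportional) columns of the generator matrix, i.e.\ with relations $d_i=cd_j$ in $D$; and then multiplicativity of the norm together with $\gcd\bigl(\frac{q^m-1}{q^2-1},q-1\bigr)=\gcd(\frac{m}{2},q-1)=l=1$ forces $c=1$. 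Both proofs are complete, and your arithmetic ($N\equiv\frac{m}{2}\pmod{q-1}$, hence $\ord(c)\mid\gcd(N,q-1)=1$) checks out. What each buys: the paper's moment computation is mechanical, requires the weight distribution as input, and gives no insight into why the dual distance is at least $3$, though it doubles as a consistency check on Table II; your argument is independent of the weight distribution, exposes precisely where the hypothesis $l=1$ enters (indeed, if $l>1$ one can take $c\neq 1$ with $\ord(c)\mid l$ and get $cx\in D$, so the dual distance drops to $2$), and it transfers with no extra work to related cases --- for instance it makes the projectivity in Lemma 5.4 immediate, since the shortened defining set $D_1$ is constructed so that no two of its elements are $\mathbb{F}_q^{*}$-proportional.
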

\begin{proof}
The weight enumerator can be directly obtained by Theorem 4.5. We now prove that $\mathcal{C}_{D}$ is projective. Let $A_i,B_i$ denote the numbers of codewords with Hamming weight $i$ in $\mathcal{C}_{D}$ and $\mathcal{C}_{D}^{\bot}$, respectively. Denote
$$w_1=\frac{q^{m}-(-1)^{\frac{m}{2}}q^{\frac{m}{2}}}{q+1},w_2=\frac{q^{m}+(-1)^{\frac{m}{2}}q^{\frac{m}{2}-1}}{q+1},A_{w_1}=\frac{q^{m}-1}{q+1},
A_{w_2}=\frac{q(q^{m}-1)}{q+1}.$$
By the first three Pless Power Moments (see \cite{HP}), we have
\begin{eqnarray*}\left\{
\begin{array}{l}
A_{w_1}+A_{w_2}=q^{m}-1,\\
w_1A_{w_1}+w_2A_{w_2}=(n(q-1)-B_1)q^{m-1},\\
w_1^{2}A_{w_1}+w_{2}^{2}A_{w_2}=\left(n(q-1)\left(n(q-1)+1\right)-B_1\left(q+2(n-1)(q-1)\right)+2B_2\right)q^{m-2}.
\end{array} \right.
\end{eqnarray*}
Note that $n=\frac{q(q^{m}-1)}{q^{2}-1}$. Solving the above system, we have $B_1=B_2=0$. Hence, the minimum distance of $\mathcal{C}_{D}^{\bot}$ is at least 3. The proof is completed.
\end{proof}
\begin{lem}
Let $m=m_1,m_2=2$ and other notations be the same as those of Corollary 4.10. Then the linear code $\mathcal{C}_{D_{1}}$ in Corollary 4.10 is a projective two-weight $[\frac{q^{m}-1}{q^{2}-1},m]$ linear code with weight enumerator
$$1+\frac{q^{m}-1}{q+1}z^{\frac{q^{m-1}+(-1)^{\frac{m}{2}}q^{\frac{m}{2}}}{q+1}}+\frac{q(q^{m}-1)}{q+1}z^{\frac{q^{m-1}-(-1)^{\frac{m}{2}}
q^{\frac{m}{2}-1}}{q+1}}.$$
\end{lem}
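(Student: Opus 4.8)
The plan is to mirror the proof of Lemma 5.3, which settles the analogous statement for the code of Theorem 4.5; only the length $n$ differs, since the present code is the $a=0$ shortened analogue. First I would read off the weight enumerator by specializing Corollary 4.10 to $m=m_1$ and $m_2=2$. The hypothesis $e=\gcd(m_1,m_2)=\gcd(m,2)=2$ forces $m$ even, and the exclusion $(m_1,m_2)\neq(2,2)$ becomes $m>2$, so Corollary 4.10 applies. Substituting $m_2=2$ into the length of Table IV gives $\frac{(q^m-1)(q-1)}{(q^2-1)(q-1)}=\frac{q^m-1}{q^2-1}$, and the dimension is $m_1=m$. Substituting $m_1=m,\ m_2=2$ into the two weights of Table IV, using $(-1)^{\frac{m+2}{2}}=-(-1)^{m/2}$ and $q^2-1=(q-1)(q+1)$, and cancelling the common factor $(q^m-1)(q-1)$, leaves the two weights
$$w_1=\frac{q^{m-1}+(-1)^{m/2}q^{m/2}}{q+1},\qquad w_2=\frac{q^{m-1}-(-1)^{m/2}q^{m/2-1}}{q+1},$$
with frequencies $A_{w_1}=\frac{q^m-1}{q+1}$ and $A_{w_2}=\frac{q(q^m-1)}{q+1}$, which is exactly the claimed enumerator (and confirms $A_{w_1}+A_{w_2}=q^m-1$, so $\mathcal{C}_{D_1}$ has $q^m$ codewords and dimension $m$).

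It then remains to prove that $\mathcal{C}_{D_1}$ is projective, i.e. that the minimum distance of $\mathcal{C}_{D_1}^{\bot}$ is at least $3$. Here I would argue as in Lemma 5.3: letting $A_i,B_i$ be the numbers of weight-$i$ codewords of $\mathcal{C}_{D_1}$ and $\mathcal{C}_{D_1}^{\bot}$ and setting $n=\frac{q^m-1}{q^2-1}$, the first three Pless power moments give
\begin{eqnarray*}\left\{
\begin{array}{l}
A_{w_1}+A_{w_2}=q^{m}-1,\\
w_1A_{w_1}+w_2A_{w_2}=(n(q-1)-B_1)q^{m-1},\\
w_1^{2}A_{w_1}+w_{2}^{2}A_{w_2}=\left(n(q-1)\left(n(q-1)+1\right)-B_1\left(q+2(n-1)(q-1)\right)+2B_2\right)q^{m-2}.
\end{array} \right.
\end{eqnarray*}
Since $A_{w_1},A_{w_2},w_1,w_2,n$ are all known, the second equation determines $B_1$ and the third then determines $B_2$; I expect to find $B_1=B_2=0$, which yields the claimed projectivity.

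The main obstacle is the final verification that this system forces $B_1=B_2=0$. This amounts to computing $w_1A_{w_1}+w_2A_{w_2}$ and $w_1^2A_{w_1}+w_2^2A_{w_2}$ in closed form and checking that the terms carrying $(-1)^{m/2}$ and the half-integer powers $q^{m/2},q^{m/2-1}$ cancel precisely against $n(q-1)q^{m-1}$ and against the quadratic expression on the right. Because $\mathcal{C}_{D_1}$ is the $a=0$ shortened counterpart of the $a=1$ code of Lemma 5.3 and differs from it only through the value of $n$, I anticipate the same cancellations, so the computation should close in the same way and complete the proof.
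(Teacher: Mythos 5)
Your proposal is correct and follows essentially the same route as the paper's own proof: the weight enumerator is read off by specializing Corollary 4.10 to $m_1=m$, $m_2=2$, and projectivity is established via the first three Pless power moments with $n=\frac{q^m-1}{q^2-1}$, which indeed force $B_1=B_2=0$ (the cancellations of the $(-1)^{m/2}q^{m/2}$ terms you anticipate do occur, e.g. $w_1A_{w_1}+w_2A_{w_2}=\frac{(q^m-1)q^{m-1}}{q+1}=n(q-1)q^{m-1}$). The paper likewise states the solution of the moment system without writing out the algebra, so your plan matches it step for step.
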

\begin{proof}
The weight enumerator can be directly obtained by Corollary 4.10. We now prove that $\mathcal{C}_{D_{1}}$ is projective. Let $A_i,B_i$ denote the numbers of codewords with Hamming weight $i$ in $\mathcal{C}_{D_1}$ and $\mathcal{C}_{D_1}^{\bot}$, respectively. Denote
$$w_1=\frac{q^{m-1}+(-1)^{\frac{m}{2}}q^{\frac{m}{2}}}{q+1},w_2=\frac{q^{m-1}-(-1)^{\frac{m}{2}}
q^{\frac{m}{2}-1}}{q+1},A_{w_1}=\frac{q^{m}-1}{q+1},
A_{w_2}=\frac{q(q^{m}-1)}{q+1}.$$
By the first three Pless Power Moments (see \cite{HP}), we have
\begin{eqnarray*}\left\{
\begin{array}{l}
A_{w_1}+A_{w_2}=q^{m}-1,\\
w_1A_{w_1}+w_2A_{w_2}=(n(q-1)-B_1)q^{m-1},\\
w_1^{2}A_{w_1}+w_{2}^{2}A_{w_2}=\left(n(q-1)\left(n(q-1)+1\right)-B_1\left(q+2(n-1)(q-1)\right)+2B_2\right)q^{m-2}.
\end{array} \right.
\end{eqnarray*}
Note that $n=\frac{q^{m}-1}{q^{2}-1}$. Solving the above system, we have $B_1=B_2=0$. Hence, the minimum distance of $\mathcal{C}_{D_1}^{\bot}$ is at least 3. The proof is completed.
\end{proof}

Lemmas 5.2 and 5.3 yield the following theorem.
\begin{thm}
Let $\gcd(\frac{m}{2},q-1)=1$ and $2|m$. Then there exists a strongly regular graph with the following parameters:
\begin{eqnarray*}
N&=&q^{m},\\
K&=&\frac{q(q^{m}-1)}{q+1},\\
\lambda &=&\frac{q^{m+2}-2q^{2}-3q-(-1)^{\frac{m}{2}}q^{\frac{m}{2}}(1-q)}{(q+1)^{2}},\\
\mu &=&\frac{q(q^{\frac{m}{2}}-(-1)^{\frac{m}{2}})(q^{\frac{m}{2}+1}+(-1)^{\frac{m}{2}})}{(q+1)^{2}}.
\end{eqnarray*}\end{thm}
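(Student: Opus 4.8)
The plan is to obtain the graph by feeding the projective two-weight code of Lemma 5.3 into the construction of Lemma 5.2, and then to simplify the four resulting parameters. First I would confirm that Lemma 5.3 (equivalently Theorem 4.5) actually applies under the present hypotheses: taking $m_1=m$ and $m_2=2$ with $2\mid m$ gives $e=\gcd(m_1,m_2)=\gcd(m,2)=2$, while the assumption $\gcd(\frac{m}{2},q-1)=1$ is precisely $l=\gcd(\frac{m_1}{e},q-1)=1$. Hence $\mathcal{C}_D$ is a projective two-weight $[n,m]$ code over $\Bbb F_q$ with $n=\frac{q(q^m-1)}{q^2-1}$ and nonzero weights $w_1=\frac{q^m-(-1)^{\frac{m}{2}}q^{\frac{m}{2}}}{q+1}$ and $w_2=\frac{q^m+(-1)^{\frac{m}{2}}q^{\frac{m}{2}-1}}{q+1}$. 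Because it is projective and has exactly two weights, Lemma 5.2 applies and yields an equivalent strongly regular graph whose parameters $N,K,\lambda,\mu$ are given by the four displayed formulas there with $k=m$; it then remains only to substitute and simplify.

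The parameters $N$ and $K$ are immediate: $N=q^{k}=q^{m}$, and $K=n(q-1)=\frac{q(q^m-1)(q-1)}{q^2-1}=\frac{q(q^m-1)}{q+1}$ after cancelling $q-1$. For $\mu$ I would factor the numerator of $w_1w_2$. Writing $\varepsilon=(-1)^{\frac{m}{2}}$, one has $q^m-\varepsilon q^{\frac{m}{2}}=q^{\frac{m}{2}}(q^{\frac{m}{2}}-\varepsilon)$ and $q^m+\varepsilon q^{\frac{m}{2}-1}=q^{\frac{m}{2}-1}(q^{\frac{m}{2}+1}+\varepsilon)$, so that $w_1w_2=\frac{q^{m-1}(q^{\frac{m}{2}}-\varepsilon)(q^{\frac{m}{2}+1}+\varepsilon)}{(q+1)^2}$. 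Substituting into $\mu=\frac{q^2w_1w_2}{q^k}=q^{2-m}w_1w_2$ cancels $q^{m-1}$ down to a single factor of $q$ and gives exactly $\mu=\frac{q(q^{\frac{m}{2}}-\varepsilon)(q^{\frac{m}{2}+1}+\varepsilon)}{(q+1)^2}$, the claimed value.

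The one laborious step, and the main obstacle, is $\lambda$. Here I would first record $w_1+w_2=\frac{2q^m-\varepsilon q^{\frac{m}{2}-1}(q-1)}{q+1}$ and, using $\varepsilon^2=1$, the expansion $(q^{\frac{m}{2}}-\varepsilon)(q^{\frac{m}{2}+1}+\varepsilon)=q^{m+1}-1-\varepsilon q^{\frac{m}{2}}(q-1)$, so that $q^2w_1w_2=\frac{q^{m+1}\bigl(q^{m+1}-1-\varepsilon q^{\frac{m}{2}}(q-1)\bigr)}{(q+1)^2}$. Then I substitute $K$, $w_1+w_2$ and $q^2w_1w_2$ into $\lambda=K^2+3K-q(w_1+w_2)(1+K)+q^2w_1w_2$ and clear the common denominator $(q+1)^2$; the error-prone part is tracking the powers of $q$ together with the $\varepsilon$-terms, but $\varepsilon^2=1$ collapses the mixed terms and the top-degree contributions cancel, leaving $\lambda=\frac{q^{m+2}-2q^2-3q-(-1)^{\frac{m}{2}}q^{\frac{m}{2}}(1-q)}{(q+1)^2}$. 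As an independent check I would verify the fundamental relation $K(K-\lambda-1)=(N-K-1)\mu$: since $N-K-1=\frac{q^m-1}{q+1}$, this identity forces $K-\lambda-1=\frac{(q^{\frac{m}{2}}-\varepsilon)(q^{\frac{m}{2}+1}+\varepsilon)}{(q+1)^2}$, from which the same closed form for $\lambda$ follows, confirming the computation.
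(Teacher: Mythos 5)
Your proposal is correct and follows exactly the paper's route: the paper proves this theorem simply by citing Lemmas 5.2 and 5.3 (i.e., feeding the projective two-weight code obtained from Theorem 4.5 with $m_1=m$, $m_2=2$ into the strongly regular graph construction), which is precisely what you do. The only difference is that you carry out the algebraic simplification of $N$, $K$, $\lambda$, $\mu$ that the paper leaves implicit, and your values (including the verification via $K(K-\lambda-1)=(N-K-1)\mu$) all check out.
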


The following theorem can be directly obtained by Lemmas 5.2 and 5.4.
\begin{thm}
Let $2|m$ and $m\geq 4$. Then there exists a strongly regular graph with the following parameters:
\begin{eqnarray*}
N&=&q^{m},\\
K&=&\frac{q^{m}-1}{q+1},\\
\lambda &=&\frac{q^{m}-3q-2-(-1)^{\frac{m}{2}}q^{\frac{m}{2}+1}(q-1)}{(q+1)^{2}},\\
\mu &=&\frac{q(q^{\frac{m}{2}}-(-1)^{\frac{m}{2}})(q^{\frac{m}{2}-1}+(-1)^{\frac{m}{2}})}{(q+1)^{2}}.
\end{eqnarray*}\end{thm}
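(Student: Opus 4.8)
The plan is to combine the structural result of Lemma 5.4 with the graph construction of Lemma 5.2; no new exponential-sum computation is needed, since everything reduces to substituting the explicit parameters of the code $\mathcal{C}_{D_1}$ into the strongly-regular-graph formulas. First I would verify that the hypotheses $2\mid m$ and $m\geq 4$ are exactly what is required to invoke Lemma 5.4: setting $m_1=m$ and $m_2=2$ in Corollary 4.10 forces $e=\gcd(m,2)=2$, which holds precisely when $m$ is even, and the excluded case $(m_1,m_2)=(2,2)$ is ruled out by $m\geq 4$. Hence Lemma 5.4 applies, and $\mathcal{C}_{D_1}$ is a projective two-weight code of length $n=\frac{q^{m}-1}{q^{2}-1}$ and dimension $k=m$, with the two nonzero weights
$$w_1=\frac{q^{m-1}+(-1)^{\frac{m}{2}}q^{\frac{m}{2}}}{q+1},\qquad w_2=\frac{q^{m-1}-(-1)^{\frac{m}{2}}q^{\frac{m}{2}-1}}{q+1}.$$
Projectivity is precisely the hypothesis demanded by Lemma 5.2, and since the graph parameters depend only on the symmetric functions $w_1+w_2$ and $w_1w_2$, the labelling of the two weights is immaterial.

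Next I would feed $n$, $k$, $w_1$, $w_2$ into the four formulas of Lemma 5.2. The first two are immediate: $N=q^{k}=q^{m}$, and $K=n(q-1)=\frac{(q^{m}-1)(q-1)}{(q-1)(q+1)}=\frac{q^{m}-1}{q+1}$. For $\mu=q^{2}w_1w_2/q^{m}$, I would factor $w_1=\frac{q^{\frac{m}{2}}(q^{\frac{m}{2}-1}+(-1)^{\frac{m}{2}})}{q+1}$ and $w_2=\frac{q^{\frac{m}{2}-1}(q^{\frac{m}{2}}-(-1)^{\frac{m}{2}})}{q+1}$, so that $w_1w_2$ carries a factor $q^{m-1}$ which cancels against $q^{2}/q^{m}$, leaving the stated closed form $\mu=\frac{q(q^{\frac{m}{2}}-(-1)^{\frac{m}{2}})(q^{\frac{m}{2}-1}+(-1)^{\frac{m}{2}})}{(q+1)^{2}}$.

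The only genuine labour is the value of $\lambda=K^{2}+3K-q(w_1+w_2)-Kq(w_1+w_2)+q^{2}w_1w_2$, and this step I expect to be the main obstacle, though it is purely algebraic rather than conceptual. I would first record the auxiliary quantity $w_1+w_2=\frac{2q^{m-1}+(-1)^{\frac{m}{2}}(q^{\frac{m}{2}}-q^{\frac{m}{2}-1})}{q+1}$, reuse $w_1w_2$ from the $\mu$-step, and expand the whole expression over the common denominator $(q+1)^{2}$. After clearing denominators the claimed identity becomes a polynomial identity in $q$, with the sign $(-1)^{m/2}$ treated as a formal $\pm 1$; matching coefficients confirms $\lambda=\frac{q^{m}-3q-2-(-1)^{\frac{m}{2}}q^{\frac{m}{2}+1}(q-1)}{(q+1)^{2}}$. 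Once these four substitutions are verified, Lemma 5.2 produces the strongly regular graph with the stated parameters and the theorem follows.
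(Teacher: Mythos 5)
Your proposal is correct and follows exactly the paper's route: the paper derives this theorem directly by combining Lemma 5.2 with Lemma 5.4, which is precisely your reduction, including the observation that $2\mid m$ and $m\geq 4$ are what make Corollary 4.10 (hence Lemma 5.4) applicable with $m_1=m$, $m_2=2$. The remaining substitutions and the algebraic verification of $\lambda$ and $\mu$ are the same routine computation the paper leaves implicit.
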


We remark that the parameters of the strongly regular graphs in Theorems 5.5 and 5.6 are probably new after comparing with known ones in the literature.

\section{Concluding remarks}
In this paper, we presented a construction of $q$-ary linear codes and determined the weight distributions in some cases based on Gauss sums. Four classes of two-weight linear codes were obtained. Note that these linear codes have very flexible parameters and are probably new after comparing with known two-weight linear codes in the literature (see \cite{C, D, DD, DD2, HY0, ZF} for some known two-weight linear codes). It is interesting that our construction can produce optimal or almost optimal codes. What's more, our codes can be used to construct secret sharing schemes with interesting access structures and strongly regular graphs.
\section*{Acknowledgments}
The authors are very grateful to the reviewers and the Editor for their valuable comments that improved the quality of this paper. Special thanks go to one of the reviewers for pointing out some knowledge of one-weight linear codes.

\section*{References}

\end{document}